\documentclass[12pt]{article}
 
\usepackage{amsmath, amssymb, amsfonts,amsthm} 
\usepackage{graphicx,color}
\usepackage{hyperref}  
\hypersetup{hypertexnames=false}
\usepackage{physics}  
\usepackage{bm} 
\usepackage{geometry}  
\usepackage[numbers]{natbib}
  
\title{Model-Free Quantum Stabilization via Finite-Difference Lyapunov Control}

\author{
	Robert Vrabel\thanks{Corresponding author. Email: robert.vrabel@stuba.sk. 
		This is an Accepted Manuscript of an article published by Taylor \& Francis 
		in \emph{International Journal of Control} on 09 April 2026. 
		The Version of Record is freely available at: 
		\url{https://www.tandfonline.com/doi/full/10.1080/00207179.2026.2656156}.} \\
	Slovak University of Technology in Bratislava, \\
	Institute of Applied Informatics, Automation and Mechatronics, \\
	Bottova 25, 917 24 Trnava, Slovakia
}
\date{}

\theoremstyle{plain}
\newtheorem{theorem}{Theorem}[section]
\newtheorem{lemma}[theorem]{Lemma}
\newtheorem{proposition}[theorem]{Proposition}
\newtheorem{corollary}[theorem]{Corollary}

\theoremstyle{definition} 
\newtheorem{definition}[theorem]{Definition}
\newtheorem{problem}[theorem]{Problem}

\theoremstyle{remark}
\newtheorem{remark}[theorem]{Remark}

\newenvironment{keywords}{
	\vspace{2mm}
	\noindent\small\textbf{Keywords:}\ \itshape}{\par\vspace{2mm}}

\begin{document}

	\maketitle
\begin{abstract}
	We develop a model-free framework for stabilizing quantum states using only empirical finite-difference evaluations of a measurement-derived Lyapunov observable. The controller requires no knowledge of the Hamiltonian, dissipative structure, or generator of the dynamics, and relies solely on discrete measurement data. The approach combines three key elements: sign-based Lyapunov descent, adaptive gain amplification, and a finite-difference analogue of LaSalle’s invariance principle. We provide rigorous conditions under which these mechanisms guarantee asymptotic stabilization along the sampling instants in the drift-free case and practical input-to-state stability (ISS) in the presence of unknown drift and noise.
	 The resulting feedback law is simple, derivative-free, and experimentally feasible. A qubit example illustrates the complete closed-loop scheme and the predicted ISS-type behavior. Although demonstrated on a single qubit, the theory applies to arbitrary finite-dimensional quantum systems and offers a foundation for further developments in stochastic, subspace, and multi-qudit model-free quantum control.
\end{abstract}

\begin{keywords}
	Model-free quantum control,
	Lyapunov stabilization,
	finite-difference methods,
	samp\-led-data feedback,
	input-to-state stability,
	unknown drift,
	measurement-based control.
\end{keywords}

\section{Introduction}\label{sec:Introduction}
Stabilization of quantum states is a core requirement in quantum information processing, high-precision sensing, and coherent manipulation of nanoscale systems. Classical feedback theory provides powerful tools for ensuring stability under uncertainty, yet its direct application to quantum systems is severely constrained: the controller typically lacks knowledge of the system Hamiltonian, the dissipative mechanisms, or even the structure of the effective generator. Modern experimental platforms--including superconducting qubits, trapped ions, and photonic architectures--operate in regimes where device parameters drift and environmental interactions cannot be accurately identified~\cite{wiseman_milburn_2010}. A recent survey~\cite{weidner2025robust} highlights that this mismatch creates a persistent gap between experimental practice and the assumptions underlying most existing quantum-control strategies, noting in particular that classical robust-control methods require a level of model knowledge rarely available in quantum settings.

This discrepancy reveals a fundamental tension between model-based stabilization methods and the information structure encountered in actual laboratory conditions. Approaches based on Hamiltonian engineering, measurement-based feedback, or optimal-control design almost invariably rely on at least partial knowledge of the dynamics~\cite{dong_petersen_2010,altafini_ticozzi_2012,ticozzi_viola_2008}. Even in continuous-measurement feedback and quantum filtering~\cite{belavkin_1983,wiseman_milburn_1993,bouten_2007}, the drift and noise operators must be specified or estimated. 
In contrast, real-time experiments often operate directly from streaming
measurement data without reconstructing any dynamical model, creating a gap
between available control-theoretic tools and the information structure
encountered in practice.

The present work establishes the foundations of such a framework. Our approach is built on two key ideas. First, since analytic derivatives of a Lyapunov function are inaccessible when the dynamics are unknown, we rely exclusively on \emph{finite-difference information} extracted from measurement outcomes. This naturally leads to a discrete-time analogue of LaSalle’s invariance principle, formulated in terms of observable differences instead of derivatives. While related concepts appear in derivative-free stability analysis~\cite{khalil_2002}, no adaptation to quantum systems has previously been available. Second, we show that stabilization in the presence of unmodeled drift and noise admits a quantum analogue of classical input-to-state stability (ISS)~\cite{sontag_1989,jiang_mareels_wang_1996}. Unknown Hamiltonian drift may fundamentally prevent exact convergence; nevertheless, practical stability within a disturbance-dependent neighborhood of the target state can be guaranteed.

To support these developments, we introduce several structural notions tailored
to in\-for\-mation-limited quantum control: adaptive Lyapunov observables,
perturbation-based descent control, and model-free stabilizability. These allow
the controller to determine descent directions using only measurement data,
without any form of model identification. The resulting stabilization mechanism
operates entirely without knowledge of the generator and exhibits behavior
analogous to Lyapunov-based feedback in nonlinear control.

Recent years have seen substantial progress in quantum control from a
	control-theoretic perspective. Lyapunov-based stability analysis and
	stabilization techniques have been developed for both closed and open quantum
	systems, including invariance principles and convergence guarantees
	\cite{dong_petersen_2010,altafini_ticozzi_2012,ticozzi_viola_2008,
		ticozzi_viola_2012,Emzir2022}. These works establish a rigorous foundation for
	feedback and open-loop control design, but typically rely on explicit knowledge
	of the system generator or its parametrization.

	In parallel, robust and switching-based Lyapunov control strategies have been
	proposed for open quantum systems affected by decoherence and dissipation.
	Recent results demonstrate practical stability, finite-time convergence, or
	contractive behavior under switching control laws
	\cite{wu_switching_qubit_2025}. While such approaches significantly relax the
	requirement of asymptotic convergence in open quantum systems, they still
	presuppose detailed knowledge of the system dynamics and admissible target
	structures, which limits their applicability in information-limited settings.

	Measurement-based feedback and stochastic control formulations have been studied
	using quantum filtering and continuous measurement models
	\cite{wiseman_milburn_2010,belavkin_1983,wiseman_milburn_1993,bouten_2007}.
	Such approaches provide powerful tools for real-time control under uncertainty,
	yet still require specification or estimation of the drift and noise operators.
	Related stability analyses for open quantum systems have also been developed
	using operator-theoretic and semigroup-based methods
	\cite{lindblad_1976,gorini_kossakowski_sudarshan_1976,pan_amini_miao_gough_ugrinovskii_james_2014}.

	More recently, learning-based approaches have been explored for measurement-based
	quantum feedback control, including reinforcement learning and deep reinforcement
	learning schemes that construct control policies directly from measurement data
	\cite{song_drl_feedback_2025}. These methods demonstrate impressive empirical
	performance, such as accelerated convergence and robustness to delays and
	imperfect measurements. However, they rely on offline training, reward-function
	design, and repeated interactions with simulated or experimental systems, and
	do not provide Lyapunov-based or invariance-type stability guarantees.

	From a broader nonlinear control viewpoint, robustness and disturbance rejection
	are naturally captured by input-to-state stability (ISS) concepts
	\cite{sontag_1989,jiang_mareels_wang_1996}, while classical Lyapunov theory
	provides invariance-based and derivative-free stability tools
	\cite{khalil_2002}. The present work builds on these control-theoretic foundations
	by bridging measurement-based quantum control with finite-difference Lyapunov
	methods and invariance principles, thereby aligning quantum stabilization with
	core themes of modern control theory.

	Related research has also addressed the complementary problem of quantum system
	identification and robustness under model uncertainty. Fundamental limitations
	and capabilities of identifying unknown quantum dynamics from input--output data
	have been analyzed in the context of black-box quantum systems
	\cite{burgarth_yuasa_prl_2012}. Such results highlight that, even under idealized
	conditions, system identification may only be possible up to equivalence classes
	and often requires strong structural assumptions.

	In parallel, robust stability of quantum systems subject to uncertain
	Hamiltonian perturbations has been studied within a control-theoretic framework
	\cite{petersen_ugrinovskii_james_2012}. These approaches provide valuable
	robustness guarantees but rely on explicit system models and uncertainty
	descriptions. Together, these developments underline both the relevance and the
	practical limitations of model-based robust, switching, and learning-based
	approaches, thereby motivating the pursuit of stabilization methods that operate
	without system identification and rely solely on measurement-derived information.

\noindent
	{\it Conceptual bottleneck addressed by this work.}
	Despite extensive progress in quantum control, existing stabilization methods
	share a common information-structural limitation. Lyapunov-based designs,
	stochastic feedback schemes, switching control strategies, adaptive
	identification methods, and learning-based approaches all require access to at
	least one of the following: the system generator, the quantum state, analytic
	Lyapunov derivatives, a parametric model of the dynamics, or extensive offline
	training data. These requirements are incompatible with many practical quantum
	platforms, where the generator is unknown, the state is not observable, and only
	limited measurement data are available in real time.

	The present work addresses this bottleneck by reformulating stabilization
	entirely in terms of measurement-derived finite differences of a Lyapunov
	observable. By abandoning analytic derivatives, model identification, and
	training-based policy synthesis, the proposed framework enables genuine
	model-free stabilization under severe information constraints. In particular,
	the use of sign-based finite-difference descent and adaptive gain amplification
	leads to a discrete-time analogue of LaSalle’s invariance principle that is
	applicable to information-limited quantum systems.

	To the best of our knowledge, this is the first framework that combines
	measurement-only feedback, finite-difference Lyapunov analysis, and rigorous
	invariance and ISS-type stability guarantees, without requiring system
	identification, state reconstruction, or learning-based training.

The remainder of the paper develops this framework systematically. Section~\ref{sec:Preliminaries} summarizes notation and quantum-mechanical preliminaries. Section~\ref{sec:New_Fundamental_Concepts} introduces the key structural concepts enabling model-free stabilization. Section~\ref{sec:System_Description_and_Information_Structure} formalizes the problem setting, including unknown drift, noise, and measurement constraints. Section~\ref{sec:Proposed_Framework} presents the proposed controller based on finite-difference Lyapunov descent, sign-based feedback, and adaptive gain amplification. Section~\ref{sec:Preliminary_Theoretical_Results} establishes convergence in the drift-free case, including a finite-difference LaSalle principle. Section~\ref{sec:iss} extends the analysis to unknown drift and dissipation, yielding a quantum ISS property. Section~\ref{sec:Example_Qubit_Stabilization} illustrates the method on a qubit example, and Section~\ref{sec:Conclusion} summarizes the contributions and outlines directions for future work.

\subsection{State of the Art}

The stabilization of quantum systems has traditionally been pursued through 
\emph{model-based} feedback or open-loop optimal control. In most formulations, 
the system dynamics are assumed to satisfy a known Lindblad master equation,
\begin{equation}
	\dot{\rho}(t)
	= -i[H,\rho(t)]
	+ \sum_{k}\!\left( L_k \rho(t) L_k^\dagger
	-\tfrac12 \{L_k^\dagger L_k,\rho(t)\} \right),
	\label{eq:lindblad-known}
\end{equation}
where the Hamiltonian \(H\) and dissipative operators \(\{L_k\}\) are known. 
This representation is valid under Markovian assumptions and corresponds to the 
canonical Gorini--Kossakowski--Sudarshan--Lindblad generator 
\cite{lindblad_1976,gorini_kossakowski_sudarshan_1976}. 
Most quantum-control strategies are built on this structural knowledge.

\textbf{(i) Lyapunov-based quantum control.}
Lyapunov-based stabilization--originating from early work on coherent control 
and dissipative engineering--requires explicit computation of the Lyapunov 
derivative $\dot V$ from the generator. 
A considerable body of work exploits algebraic relations among \(H\), 
$\{L_k\}$, and the Lyapunov operator to design stabilizing feedback or enforce 
convergence to decoherence-free subspaces 
\cite{ticozzi_viola_2008,ticozzi_viola_2012}. 
These approaches remain fundamentally model-dependent, as they require either 
$\dot{\rho}(t)$ or the explicit action of $\mathcal{L}$.

\textbf{(ii) Stochastic and continuous-measurement feedback.}
Continuous monitoring leads to stochastic master equations of the form
\begin{equation}
	d\rho(t)
	= \mathcal{L}(\rho(t))\,dt
	+ \mathcal{M}(\rho(t))\, dW_t,
	\label{eq:sme}
\end{equation}
introduced by Belavkin \cite{belavkin_1983} and later refined by Wiseman and 
Milburn \cite{wiseman_milburn_2010,wiseman_milburn_1993}. 
Stabilization in this setting relies on quantum filtering and the stochastic 
framework of Bouten, van Handel, and James \cite{bouten_2007}. 
These methods require full knowledge of the Lindblad generator to construct the 
filter and to design the feedback law based on the estimated state.

\textbf{(iii) Stability analysis via quantum invariance principles.}
Quantum analogues of classical invariance principles have been established for 
Markovian systems in both Schr\"odin\-ger and Heisenberg pictures 
\cite{ticozzi_viola_2012,Emzir2022,pan_amini_miao_gough_ugrinovskii_james_2014}. 
These results provide strong stability guarantees but rely on structural 
assumptions such as exact specification of $\mathcal L$, commutation relations 
(e.g.\ $[H,V]=0$), or existence of faithful invariant states. 
Such assumptions are seldom met when only limited measurement data are 
available and no model identification is feasible.

\textbf{(iv) Learning-based and adaptive Hamiltonian identification.}
Adaptive identification methods attempt to estimate unknown Hamiltonian 
parameters from measurement data \cite{zhang_sarovar_2014}. 
These schemes require a parametric model of the dynamics and informative 
measurements. 
When the number of accessible observables is small or the drift varies with 
time, such identification becomes unreliable or infeasible.

\textbf{(v) Reinforcement-learning and data-driven approaches.}
Data-driven methods, including reinforcement learning, have recently been
explored for quantum control with promising numerical results
\cite{bukov_rl_2018,niu_rl_2019}.
These approaches typically rely on extensive offline training, repeated
simulations, or implicit parametrizations of the system dynamics.
While effective in practice, they generally do not provide analytical stability
guarantees, such as Lyapunov monotonicity or invariance-based convergence
properties, and their performance may depend sensitively on the training
environment.

\bigskip

\noindent\textbf{Limitations of existing methods.}
Despite substantial progress, \emph{all existing stabilization strategies rely, 
	directly or indirectly, on knowledge of the system generator}. 
In particular, they require access to at least one of
\[
\mathcal{L},\qquad 
H,\qquad 
\{L_k\},\qquad
\dot{\rho}(t),\qquad
\nabla V(\rho),
\]
each presupposing detailed structural knowledge of the underlying dynamics. 
Classical Lyapunov methods, stochastic filtering, and quantum invariance 
principles all require evaluation of either the generator-induced derivative or 
its action on the Lyapunov function.

In realistic quantum experiments, however, only noisy measurement statistics 
are available \cite{clerk_2010}, and neither $\rho(t)$ nor $\dot{\rho}(t)$ nor 
$\mathcal{L}$ can be accessed or reconstructed reliably. 
Under these constraints, model-based stabilization techniques become 
inapplicable.

The present work takes a different approach and eliminates model dependence 
entirely. 
We rely solely on a measurement-derived Lyapunov observable and its finite 
differences, enabling stabilization without access to the generator or to the 
quantum state.

\section{Preliminaries}\label{sec:Preliminaries}

This section summarizes the notation and basic structures used throughout the paper.

\emph{Quantum states.}
A finite-dimensional quantum system with Hilbert space 
$\mathcal{H}\cong\mathbb{C}^n$ is represented by a density operator
\[
\rho \in \mathcal{D}(\mathcal{H})
:= \{X \in \mathbb{C}^{n\times n} : X \succeq 0,\; \Tr(X)=1\}.
\]
Pure states correspond to rank-one projectors $\rho=\dyad{\psi}$ for a normalized
vector $\ket{\psi}\in\mathcal{H}$.  
The state space $\mathcal{D}(\mathcal{H})$ is convex and compact, properties that
will be used to guarantee continuity and well-posedness of the model-free
feedback laws introduced later.

\emph{Quantum dynamics (unknown generator).}
The uncontrolled evolution is governed by an unknown completely positive,
trace-preserving (CPTP) generator,
\[
\dot{\rho}(t)=\mathcal{F}(\rho(t)),
\]
with no structural assumptions beyond complete positivity and trace preservation.
In particular, the controller has no knowledge of the Hamiltonian component,
dissipative operators, or whether the evolution is Markovian.

\emph{Measurement model.}
Information about the system is obtained through a fixed positive operator-valued
measure (POVM) $\{M_j\}$, where $M_j\succeq 0$ and $\sum_j M_j=I$.  
The probability of outcome $j$ at time $t$ is
\[
p_j(t)=\Tr(M_j\rho(t)).
\]
Measurement data are used only to evaluate scalar Lyapunov-like quantities
constructed from measurement statistics, without any form of state reconstruction
or model identification.

\emph{Control inputs.}
The experimenter can apply a set of Hamiltonians $\{H_k\}$ with scalar inputs
$u_k(t)$, yielding the controlled evolution
\begin{equation}
	\dot{\rho}(t)
	= \mathcal{F}(\rho(t))
	+ \sum_{k} u_k(t)\,[-iH_k,\rho(t)],
	\label{eq:controlled_dynamics}
\end{equation}
where $[-iH_k,\rho]$ denotes the unitary direction generated by $H_k$.  
The inputs $u_k(t)$ must be determined solely from measurement-derived
information; neither $\mathcal{F}$ nor $\dot{\rho}(t)$ is accessible to the
controller.

\emph{Control-theoretic interpretation.}
From a nonlinear control perspective, the problem corresponds to stabilizing an
unknown dynamical system evolving on a compact manifold.  
Only a scalar measurement-derived signal $V(t)$ is available, and the feedback
law must rely exclusively on finite-difference variations of $V(t)$.  
This setting places the framework within derivative-free Lyapunov methods and
information-limited output feedback.

The definitions introduced in this section provide the mathematical and physical
background for the conceptual framework and control problem formulated in
Sections~\ref{sec:New_Fundamental_Concepts}
and~\ref{sec:System_Description_and_Information_Structure}.

\section{New Fundamental Concepts}\label{sec:New_Fundamental_Concepts}

We introduce four structural notions that form the basis of a model-free
stabilization framework. These concepts abstract away from Hamiltonians and
Lindblad operators, focusing on what can be inferred from measurement data alone.

\begin{definition}
	A pure state $\ket{\psi}$ is \emph{model-free stabilizable} if there exists a
	feedback law
	\[
	u(t) = (u_1(t),\dots,u_m(t)),
	\]
	depending solely on accessible measurement data, such that
	\[
	\rho(t) \to \dyad{\psi}
	\qquad \text{for all }\rho(t_0)\in\mathcal{D}(\mathcal{H}),
	\]
	where $\Tr(\dyad{\psi}\rho(t))\to 1$.  
	The convergence must hold independently of the unknown generator $\mathcal{F}$.
\end{definition}

This definition formalizes the stabilization objective in the absence of any model
information.

\begin{definition}
	An observable $O(t)$, possibly updated from measurement data, is an
	\emph{adaptive Lyapunov observable} for the target $\ket{\psi}$ if
	\(
	V(t)=1-\Tr(O(t)\rho(t))\ge 0
	\)
	and $V(t)$ can be rendered decreasing under an admissible model-free
	feedback law.
\end{definition}

\begin{remark}
		The Lyapunov observable in Definition~3.2 is not assumed to be constructed from
		full state information. Instead, it is defined operationally through quantities
		directly accessible from measurement outcomes. Given a fixed POVM or a measured
		observable, the Lyapunov-like value
		\[
		V(t) = 1 - \mathrm{Tr}(O(t)\rho(t))
		\]
		is obtained from measurement statistics or classical post-processing of outcome
		frequencies, without any form of state reconstruction or model identification.

		Adaptivity of the observable $O(t)$ refers to the possibility of updating or
		selecting the Lyapunov observable using only past measurement data and knowledge
		of the target state. This may include switching among predefined observables,
		adjusting weighting coefficients, or redefining reference projectors based on
		observed descent behavior. Importantly, such updates depend solely on classical
		information derived from measurements and do not require access to $\rho(t)$,
		$\dot\rho(t)$, or the system generator.
\end{remark}

\begin{definition}
	A feedback law is a \emph{perturbation-based descent controller} if it selects
	its control direction by comparing finite-difference variations of $V(t)$
	under small probing perturbations of the control input.
\end{definition}

Such controllers require only evaluations of $V(t)$ and do not depend on
knowledge of $\mathcal{F}$ or analytic gradients.

\begin{definition}
	A feedback law is \emph{information-limited} if it depends exclusively on the
	stream of measurement outcomes obtained from a fixed POVM, possibly noisy or
	incomplete.
\end{definition}

This definition captures realistic constraints in which only a single observable
(or fixed collection of observables) is continuously accessible.


\section{System Description and Information Structure}\label{sec:System_Description_and_Information_Structure}

Let $\mathcal{H}\cong\mathbb{C}^n$ be a finite-dimensional Hilbert space, and let
$\rho(t)\in\mathcal{D}(\mathcal{H})$ denote the state at time $t$.  
The uncontrolled dynamics are governed by an \emph{unknown} CPTP generator:
\begin{equation}
	\dot{\rho}(t)=\mathcal{F}(\rho(t)),
	\label{eq:unknown-drift}
\end{equation}
with no structural assumptions beyond complete positivity and trace preservation.
Thus, for control purposes, \eqref{eq:unknown-drift} behaves as an arbitrary
nonlinear drift on the compact manifold $\mathcal{D}(\mathcal{H})$.  
The controller does not know the Hamiltonian, the dissipative terms, or whether
the evolution is Markovian.

\medskip
\noindent\textbf{Control Inputs.}
A family of Hamiltonians $\{H_k\}$ can be applied with scalar inputs $u_k(t)$,
leading to the controlled system
\begin{equation}
	\dot{\rho}(t)
	= \mathcal{F}(\rho(t))
	+ \sum_{k} u_k(t)\, G_k(\rho(t)),
	\label{eq:controlled-general}
\end{equation}
where 
\[
G_k(\rho):=[-iH_k,\rho]
\]
are known control vector fields.

	The assumption that the control Hamiltonians $\{H_k\}$ are known reflects
	standard experimental practice: while the intrinsic drift and dissipative
	dynamics contained in $\mathcal{F}$ are typically unknown or time-varying, the
	control Hamiltonians correspond to externally applied and experimentally
	calibrated control fields designed by the experimenter. Their functional form
	is therefore known by construction, even though their precise effect on the
	quantum state may be influenced by unknown drift or noise.

	Importantly, knowledge of $\{H_k\}$ does not imply access to the quantum state
	$\rho(t)$ or explicit evaluation of the vector fields $G_k(\rho(t))$. The
	controller never computes $[-iH_k,\rho(t)]$ and does not require state
	reconstruction; it relies solely on the reproducibility of the applied control
	actions and on measurement-derived evaluations of the Lyapunov observable.

This mirrors the classical structure 
\(
\dot{x}=f(x)+\sum_k u_k g_k(x),
\)
with $f$ unknown and $\{g_k\}$ known.
	In this sense, the proposed framework is model-free with respect to the system
	dynamics $\mathcal{F}$, while requiring only experimentally calibrated control
	channels, exactly as in classical nonlinear control under unknown drift.

\medskip
\noindent\textbf{Measurement Model.}
Information about the system is obtained through a fixed POVM $\{M_j\}$ with
$M_j\succeq 0$ and $\sum_j M_j=I$.  
The measurement statistics are
\[
p_j(t)=\Tr(M_j\rho(t)).
\]
The controller does \emph{not} have access to
\[
\rho(t),\qquad
\mathcal{F},\qquad
\dot{\rho}(t),\qquad
\dot{V}(t),
\]
so no observer design, model reconstruction, or derivative computation is
possible.

\medskip
\noindent\textbf{Control Objective.}
For a target pure state $\ket{\psi}$ with projector $P_\psi=\dyad{\psi}$, the
primary objective is to design a feedback law that stabilizes the system using
only measurement-derived information. In the idealized drift-free case, the
objective is asymptotic stabilization of the target state in the sense that
\begin{equation}
	\rho(t)\to P_\psi \qquad\text{as } t\to\infty.
	\label{eq:objective}
\end{equation}
In the sampled-data implementation considered in this paper, this objective is
interpreted along the sampling instants, i.e.,
\[
\rho(t_n)\to P_\psi \qquad\text{as } n\to\infty,
\]
for all initial conditions $\rho(t_0)\in\mathcal{D}(\mathcal{H})$, where
$t_n=t_0+n\tau$ denotes the sampling instants.

	In the presence of unknown Hamiltonian or dissipative contributions contained in
	$\mathcal{F}$, exact convergence may be fundamentally unattainable. In this more
	general setting, the objective is practical stabilization: the Lyapunov
	observable associated with $P_\psi$ is required to converge to, and remain
	within, a neighborhood of zero whose size depends on the magnitude of the
	unknown drift and disturbances. This notion is formalized later through a
	quantum analogue of input-to-state stability (ISS).

	Crucially, the controller is subject to severe information constraints. Neither
	the quantum state $\rho(t)$, nor the generator $\mathcal{F}$, nor analytic
	derivatives of the Lyapunov observable are available. The feedback law must be
	constructed solely from the measurement history and finite-difference variations
	of a scalar Lyapunov observable evaluated at sampled times.

\begin{problem}
	Given an unknown drift $\mathcal{F}$, known control Hamiltonians $\{H_k\}$, and
	only POVM measurement data, design an output-feedback law $u_k(t)$ based solely
	on the measurement history such that the stabilization objective
	\eqref{eq:objective} is achieved in the drift-free case, and practical
	stabilization is guaranteed in the presence of unknown drift, without
	estimating or identifying the generator $\mathcal{F}$.
\end{problem}

This formulation reflects the essential challenge: stabilizing an unknown quantum
system evolving on a nonlinear manifold using only scalar sampled-output
information, with no access to the underlying dynamics.

\section{Proposed Framework}\label{sec:Proposed_Framework}

We now introduce a model-free stabilization framework suited to the problem
formulated above. The main idea is to construct a Lyapunov-like functional from
accessible measurement data and to enforce its monotonic decrease using only
finite-difference information evaluated at discrete sampling instants.

From the available measurement scheme we extract a scalar quantity
\[
V(t)\ge 0, \qquad 
V(t)=0 \;\text{iff}\; \rho(t)=P_\psi,
\]
which serves as a Lyapunov observable.  
Typical choices include
\( V(t)=1-\Tr(P_\psi\rho(t)) \), though $V(t)$ may be adaptive or generated
directly from measurement outcomes.

	In particular, for a projective measurement
	$\{P_\psi,\, I-P_\psi\}$, the value of $V(t_n)$ corresponds operationally to the
	empirical probability of obtaining the outcome associated with the target
	projector $P_\psi$. In practice, this probability is estimated from measurement
	statistics collected over a finite sampling window preceding the sampling
	instant $t_n$, without any form of state reconstruction.

	More generally, $V(t_n)$ may be constructed from measurement statistics or
	classical post-processing of POVM outcomes, and its adaptive modification may be
	based solely on past measurement data and knowledge of the target state, for
	instance by switching among predefined observables or adjusting reference
	projectors.

Because $\dot V(t)$ is inaccessible, the controller has access only to sampled
values of the Lyapunov observable at the sampling instants
\[
t_n = t_0 + n\tau,
\]
and to finite differences computed from successive samples.

This reflects the mixed continuous--discrete information structure of the
	problem: while the quantum state evolves in continuous time according to the
	underlying dynamics, all measurements, control updates, and stability
	assessments are performed at discrete sampling instants.

To determine a descent direction, we use the finite difference
\[
\Delta V(t_n) := V(t_n)-V(t_{n-1}),
\]
which plays the role of an empirical derivative over the most recent sampling
interval.

	The sampling interval $\tau$ is a design parameter reflecting the available
	measurement rate and control bandwidth. It is chosen sufficiently large for the
	effect of a control action on $V$ to be distinguishable from measurement noise,
	yet sufficiently small to capture local descent behavior.

	Since neither the quantum state nor the generator of the dynamics is available,
	analytic evaluation of $\dot V(t)$ or $\nabla V$ is impossible in the proposed
	model-free setting. All stability guarantees are therefore formulated directly
	in terms of finite differences evaluated along the continuous-time flow over
	successive sampling intervals.

A simple model-free control law is defined in sampled-data form as
\begin{equation}
	u_k(t)
	= -\kappa_k(t_n)\,
	\operatorname{sign}\!\bigl( V(t_n)-V(t_{n-1}) \bigr),
	\qquad t\in[t_n,t_{n+1}),
	\label{eq:sign_control_law}
\end{equation}
where $\kappa_k(t_n)>0$ are adaptive gains updated at the sampling instants and
held constant between updates (zero-order hold).

	The sign-based structure ensures robustness with respect to unknown scaling of
	the system dynamics, model uncertainty, and measurement noise, as it does not
	rely on the magnitude of $\Delta V(t_n)$ but only on its sign.

The rule selects, at each sampling instant, the control direction that most
recently reduced the Lyapunov observable.

In this sense,~\eqref{eq:sign_control_law} constitutes a model-free analogue of
	classical Lyapunov descent, replacing the condition $\dot V(t)<0$ with a
	sign-consistent finite-difference criterion evaluated along the sampling
	sequence.

To ensure that the control action eventually dominates unknown drift or
measurement noise, the gains are increased whenever the observed decrease of the
Lyapunov observable over a sampling interval is insufficient. Specifically, at
each sampling instant $t_n$ the gains are updated according to
\begin{equation}
	\kappa_k(t_{n+1})
	= \kappa_k(t_n)
	+ \alpha_k\, \bigl| V(t_n)-V(t_{n-1}) \bigr|,
	\qquad \alpha_k>0,
	\label{eq:gain_update}
\end{equation}
and are held constant on the interval $[t_n,t_{n+1})$ (zero-order hold).
This mechanism parallels classical variable-gain descent and allows the
controller to ``learn'' the required actuation magnitude.

	In practice, the gains $\kappa_k(t_n)$ are initialized with small positive
	values, while the parameters $\alpha_k>0$ determine the rate of gain
	amplification, not a precise control magnitude. As a result, no prior
	knowledge of appropriate gain values is required: whenever the applied control
	is insufficient to induce finite-difference descent, the gains increase
	automatically until the effect of the control dominates unknown drift or
	disturbances.
	
	Combined with the sign-based feedback, adaptive gain amplification guarantees
	that whenever a descent direction exists at a given sampling instant, its
	effect is eventually enforced, without requiring gradient estimation or system
	identification.

Taken together, the proposed sign-based feedback and adaptive gain update
yield the following closed-loop architecture:
\begin{center}
	\begin{tabular}{c}
		Quantum system (continuous-time evolution) \\[2mm]
		$\Downarrow$ (sampled measurement at $t_n$) \\[1mm]
		Measurement record $\Rightarrow$ computation of $V(t_n)$ \\[1mm]
		$\Downarrow$ (finite-difference descent logic) \\[1mm]
		Control inputs $u_k(t)$ via \eqref{eq:sign_control_law}, \eqref{eq:gain_update} \\
		(zero-order hold on $[t_n,t_{n+1})$)
	\end{tabular}
\end{center}

This loop is:
\begin{itemize}
	\item \emph{model-free} -- no knowledge or reconstruction of $\mathcal{F}$,
	\item \emph{information-limited} -- only scalar sampled measurement data are used,
	\item \emph{derivative-free} -- descent is enforced solely from finite differences
	evaluated at the sampling instants.
\end{itemize}

	\paragraph{Practical parameter-selection and tuning guide.}
	Although the proposed framework avoids model-dependent tuning, its practical
	implementation requires selecting a small number of design parameters. The
	procedure used in the simulations, and directly applicable in experiments, is
	summarized below.
	
	\emph{Step~1 (Lyapunov observable).}
	Select a scalar observable $V(t)$ that is computable from the available
	measurement scheme and satisfies $V(t)\ge 0$ with $V(t)=0$ at the target state.
	For pure-state stabilization, a natural choice is
	$V(t)=1-\mathrm{Tr}(P_\psi\rho(t))$, estimated from measurement outcome
	statistics. No state reconstruction is required.
	
	\emph{Step~2 (Sampling interval and measurement window).}
	Choose the sampling interval $\tau$ according to the measurement rate and control
	bandwidth. In practice, $\tau$ should be large enough for control-induced changes
	in $V(t)$ over a single sampling interval to be distinguishable from measurement
	noise, yet small enough to capture local descent behavior. When $V(t)$ is
	estimated from repeated measurement shots, $\tau$ is naturally tied to the
	duration of the measurement window.
	
	\emph{Step~3 (Initialization of adaptive gains).}
	Initialize the gains $\kappa_k(t_0)$ with small positive values at the initial
	sampling instant $t_0$. The exact choice is not critical: if the applied control
	is insufficient to induce finite-difference descent between sampling instants,
	the gain amplification law~\eqref{eq:gain_update} automatically increases
	$\kappa_k(t_n)$ until a descent direction is enforced.
	
	\emph{Step~4 (Gain amplification rates).}
	Select $\alpha_k>0$ to determine the speed of gain adaptation. Larger $\alpha_k$
	lead to faster dominance over unknown drift at the cost of stronger transient
	control activity, while smaller values yield smoother but slower convergence.
	
	\emph{Step~5 (Control constraints).}
	Impose bounds $|u_k(t)| \le u_{\max}$ to reflect physical actuator limitations.
	In simulations, these bounds are explicitly enforced by saturating the control
	inputs according to $|u_k(t)| \le u_{\max}$ with $u_{\max}=2.0$, while in
	experimental implementations they are naturally imposed by hardware constraints.
	
	\emph{Step~6 (Interpretation of steady behavior).}
	In the presence of unknown drift or noise, persistent bounded oscillations of
	the sampled Lyapunov values $V(t_n)$ indicate disturbance-limited (ISS-type)
	stabilization and not improper tuning. The size of the residual neighborhood
	can be reduced by increasing the admissible bounds on the control inputs or by
	improving measurement resolution, without modifying the control structure above.


\section{Preliminary Theoretical Results}\label{sec:Preliminary_Theoretical_Results}

Before presenting the theoretical analysis, it is important to clarify the
	scope of the results in relation to existing quantum control methods. Most
	Lyapunov-based, robust, adaptive, or learning-based control strategies assume
	access to a system model, the quantum state, or analytic Lyapunov derivatives.
	In contrast, the present framework operates under strictly weaker information
	assumptions: neither the generator nor the state is known, and control decisions
	are based solely on finite differences of a measurement-derived Lyapunov
	observable. As a result, the following analysis does not aim to optimize
	performance relative to model-based baselines, but to establish stability
	guarantees that are achievable under severe information constraints.

This section establishes the fundamental stability properties of the proposed
model-free controller in the \emph{drift-free} regime. Throughout, we assume that
the intrinsic dynamics contain no unknown Hamiltonian or dissipative contribution,
so that the evolution is purely control-driven:
\[
\dot{\rho}(t)
= \sum_{k} u_k(t)\,[-iH_k,\rho(t)].
\]
Although idealized, this regime isolates the core effect of the finite-difference
descent mechanism and enables a clean convergence analysis. These results form the
basis for the ISS-type analysis in Section~\ref{sec:iss}, where unknown drift and
noise are reintroduced.

\paragraph{Continuous-time evolution vs.\ sampled information.}
The quantum state $\rho(t)$ evolves in continuous time, but the controller
receives information only at discrete sampling instants. We therefore adopt a
standard sampled-data implementation: for a fixed sampling period $\tau>0$ and an
initial sampling time $t_0$, define
\[
t_n := t_0 + n\tau,\qquad n\in\mathbb{N}.
\]
At each $t_n$ the controller evaluates the measurement-derived Lyapunov observable
$V(t_n)$ and updates the control input, which is then held constant on the
interval $[t_n,t_{n+1})$ (zero-order hold).

Let $V(t)$ denote such a measurement-derived Lyapunov observable. The only
available descent information is the sampled finite difference
\[
\Delta V(t_n) := V(t_n)-V(t_{n-1}),\qquad n\ge 1,
\]
which replaces the inaccessible derivative $\dot V(t)$.

\begin{definition}[Observable descent condition (sampling version)]
	We say that $V$ satisfies the \emph{observable descent condition} (with sampling
	period $\tau$) if there exists $N\in\mathbb{N}$ such that along the closed-loop
	sampled trajectory,
	\[
	\Delta V(t_n)<0 \quad\text{for all } n\ge N,
	\]
	except possibly when $V$ is locally constant on the sampling window
	$[t_{n-1},t_n]$ (equivalently, when $\Delta V(t_n)=0$).
\end{definition}

This replaces the classical condition $\dot V(t)<0$ with a
measurement-compatible finite-difference analogue formulated directly on the
sampling sequence $\{t_n\}$.

\begin{lemma}[Finite-difference one-step descent under double-probe (uniform level-set form)]
	\label{lemma:descent}
	Assume drift-free dynamics, i.e.\ $\mathcal{F}(\rho)=0$, so that
	\[
	\dot{\rho}(t)=\sum_{k=1}^m u_k(t)\,[-iH_k,\rho(t)].
	\]
	Let $V(\rho)$ be a continuous (possibly adaptive) Lyapunov observable and set
	$V(t):=V(\rho(t))$.
	Assume a sampled-data (zero-order hold) implementation with sampling period
	$\tau>0$ and sampling instants $t_n=t_0+n\tau$.
	
	At each sampling instant $t_n$, for each channel $k$ the controller considers two
	\emph{constant candidate inputs} of opposite sign,
	\[
	u^{+}_{k}(t)\equiv +\kappa_k(t_n),\qquad
	u^{-}_{k}(t)\equiv -\kappa_k(t_n),
	\]
	to be applied on $[t_n,t_{n+1})$, and it selects the sign that yields the smaller
	(one-step-ahead) Lyapunov value, i.e.\ it implements the double-probe rule
	\[
	u_k(t)\in\arg\min_{\sigma\in\{+,-\}}
	V\!\bigl(\Phi_{u^{\sigma}}(\tau,\rho(t_n))\bigr)
	\qquad\text{for }t\in[t_n,t_{n+1}),
	\]
	where $\Phi_{u}(\tau,\rho)$ denotes the flow at time $\tau$ under a constant input $u$.
	
	Assume moreover the following \emph{uniform one-step descendability on level sets}:
	for every $\varepsilon>0$ there exist constants $\kappa_\star(\varepsilon)>0$ and
	$\eta(\varepsilon)>0$ such that for every state $\rho$ satisfying $V(\rho)\ge\varepsilon$
	and every sampling instant $t_n$ with $\rho(t_n)=\rho$, whenever $V$ is not locally
	constant on the preceding sampling interval $[t_{n-1},t_n)$ one has
	\begin{equation}
		\min_{k\in\{1,\dots,m\}}\ \min_{\sigma\in\{+,-\}}
		V\!\bigl(\Phi_{u_k=\sigma \kappa}(\tau,\rho)\bigr)
		\ \le\ V(\rho)-\eta(\varepsilon)
		\qquad \text{for all }\kappa\ge \kappa_\star(\varepsilon).
		\label{eq:uniform_descent_levelset}
	\end{equation} 
	
	Finally, assume that the gains are updated at sampling instants by
	\[
	\kappa_k(t_{n+1})=\kappa_k(t_n)+\alpha_k\,|V(t_n)-V(t_{n-1})|,
	\qquad \alpha_k>0,
	\]
	whenever the observed decrease is insufficient (e.g.\ $V(t_n)-V(t_{n-1})\ge 0$).
	
	Then, for every $\varepsilon>0$ there exists a finite index $N_\varepsilon$ such that
	\[
	V(t_n)\le \varepsilon \qquad \text{for all } n\ge N_\varepsilon.
	\]
	In particular, $\lim_{n\to\infty}V(t_n)=0$ along the sampling instants.
\end{lemma}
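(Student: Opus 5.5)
The plan is a two-phase ``gain-saturation then uniform descent'' argument on the sampled sequence $\{V(t_n)\}$, which is bounded in $[0,\max_{\mathcal D(\mathcal H)}V]$ by continuity of $V$ and compactness of $\mathcal D(\mathcal H)$. Fix $\varepsilon>0$. I interpret the double-probe rule as selecting, at each $t_n$, the pair $(k,\sigma)$ attaining the minimum in \eqref{eq:uniform_descent_levelset}, so that whenever $\kappa_k(t_n)\ge\kappa_\star(\varepsilon)$ for all $k$ and $V(t_n)\ge\varepsilon$, we get the one-step bound $V(t_{n+1})\le V(t_n)-\eta(\varepsilon)$. This identification is the first thing I would make explicit, since the rule as written minimizes over $\sigma$ channel by channel.

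\emph{Step 1 (gains are monotone).} By the update law, each $\kappa_k(t_n)$ is nondecreasing in $n$. Hence either (a) there is an index $M$ with $\kappa_k(t_n)\ge\kappa_\star(\varepsilon)$ for all $k$ and all $n\ge M$, or (b) some $\kappa_k(t_n)$ stays bounded below $\kappa_\star(\varepsilon)$ forever.

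\emph{Step 2 (case (a)).} From index $M$ on, every visit to the region $\{V\ge\varepsilon\}$ at a non-locally-constant instant produces a decrease of at least $\eta(\varepsilon)$. Since $V\ge 0$, the number of consecutive such steps is at most $\lceil \max V/\eta(\varepsilon)\rceil$. Therefore the sequence enters $\{V<\varepsilon\}$ in finitely many steps.

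\emph{Step 3 (case (b)).} If some gain stays bounded, then $\sum_n |\Delta V(t_n)|<\infty$ over the instants at which an update occurs, namely those with $\Delta V\ge 0$. I would show this is incompatible with $V(t_n)\ge\varepsilon$ infinitely often. Along steps without an update, $V$ strictly decreases. So the total upward variation is finite, and $V(t_n)$ converges to some limit $V_\infty$. If $V_\infty\ge\varepsilon$, compactness gives a limit point $\bar\rho$ with $V(\bar\rho)=V_\infty$. Continuity of the flow $\Phi_u(\tau,\cdot)$ in $(u,\rho)$, on the compact set of gains $[\kappa_k(t_0),\kappa_\star]$, then forces $\Delta V(t_n)\to 0$. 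This must be reconciled with the level-set hypothesis, which requires forcing the gains up, or else the locally constant exception. The second alternative is excluded by the exemption clause: on locally constant windows one has $\Delta V=0$, which is the degenerate case the observable descent condition already permits. In either sub-case we return to case (a) for all but finitely many indices.

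\emph{Step 4 (staying below $\varepsilon$).} This is the step I expect to be the main obstacle. Neither sign is the zero input, so a selected step could overshoot upward. I would handle it by running Steps 1--3 at a smaller level $\varepsilon'<\varepsilon$. I would then use uniform continuity of $(\kappa,\rho)\mapsto V(\Phi_{\pm\kappa}(\tau,\rho))-V(\rho)$ on the relevant compact set to bound the one-step increase from states with $V<\varepsilon'$ by $\varepsilon-\varepsilon'$. Combined with the $\eta(\varepsilon')$-descent whenever $V\ge\varepsilon'$, this traps the sequence in $\{V\le\varepsilon\}$ from some $N_\varepsilon$ on.

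Since $\varepsilon$ is arbitrary, $\lim_n V(t_n)=0$. If the uniform continuity bound fails because the gains grow unboundedly, I would fall back on the observation that gains only grow when $\Delta V\ge 0$. The total growth is therefore controlled by the upward variation of $V$, and I would try to show that this variation is summable.
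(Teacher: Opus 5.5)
Your skeleton is the paper's: drive the gains above $\kappa_\star(\varepsilon)$, then use the $\eta(\varepsilon)$-margin to enter $\{V<\varepsilon\}$. Your cross-channel reading of the selection rule is also the one the paper's proof implicitly uses. You have correctly located the two places where the paper's proof is silent. First, it simply asserts that each gain ``eventually exceeds $\kappa_\star(\varepsilon)$ after finitely many non-descent events''. Second, it telescopes the $\eta(\varepsilon)$-decreases over non-consecutive visits to $\{V\ge\varepsilon\}$, as if $V$ could not rise in between. However, neither of your repairs works, so the proposal has genuine gaps at exactly these points.

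\emph{Case (b).} Step 3 derives no contradiction. $\Delta V(t_n)\to 0$ follows automatically from the convergence of $V(t_n)$. The level-set hypothesis is silent about inputs with $\kappa<\kappa_\star(\varepsilon)$, so it does not ``require forcing the gains up''. Under your reading of the update, the gains stay bounded exactly when the positive parts of $\Delta V(t_n)$ are summable. Nothing in your argument rules out $V(t_n)$ decreasing strictly to a limit $\ge\varepsilon$ with no update ever triggered. Even with updates at every step, bounded gains only mean that $V$ has finite total variation. Excluding this needs an ingredient the statement does not supply, for example a fixed-size increment on every insufficient-decrease event, or a descent margin for sub-threshold gains. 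Likewise, the exemption clause does not exclude the locally-constant alternative. It is precisely what lets a trajectory sitting at a constant level $\ge\varepsilon$ survive.

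\emph{Step 4.} This fails for a more basic reason: continuity cannot make the overshoot from $\{V<\varepsilon'\}$ smaller than $\varepsilon-\varepsilon'$. As $\rho\to P_\psi$, the selected value $\min_{k,\sigma}V(\Phi_{u_k=\sigma\kappa}(\tau,\rho))$ tends to $\min_{k,\sigma}V(\Phi_{u_k=\sigma\kappa}(\tau,P_\psi))$. That limit is generically a fixed positive number. In the paper's qubit example ($H_x,H_y$, target $\ket{0}$) it equals $\sin^2(\kappa\tau/2)$ for every channel and sign. Uniform continuity only transfers this value to nearby states, and that remains true even when the gains are bounded and compactness is available. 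The sign law never applies the zero input, so every sample near the target is followed by a unitary step $e^{\mp i\kappa_k(t_n)\tau H_k}$ that generically moves the state a fixed distance away. Hence ``$V(t_n)\le\varepsilon$ for all $n\ge N_\varepsilon$'' cannot be obtained this way. It requires an extra hypothesis, such as a modulus $\omega(\varepsilon')\to 0$ bounding the one-step increase of the selected probe on $\{V<\varepsilon'\}$, or a zero (hold) option among the probes. Your fallback about summable upward variation concerns only the growth of the gains, not the size of a single jump, so it does not close this gap either.
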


\begin{proof}
	Work with a sampled-data (zero-order hold) implementation with sampling period
	$\tau>0$ and sampling instants $t_n=t_0+n\tau$. The one-step update is
	\[
	\rho(t_{n+1})=\Phi_{u(t_n)}\!\bigl(\tau,\rho(t_n)\bigr),
	\qquad V(t_n):=V(\rho(t_n)).
	\]
	
	Fix an arbitrary $\varepsilon>0$. Consider an index $n$ such that $V(t_n)\ge\varepsilon$
	and $V$ is not locally constant on $[t_{n-1},t_n)$. By the uniform level-set
	descendability assumption~\eqref{eq:uniform_descent_levelset}, there exist
	$\kappa_\star(\varepsilon)>0$ and $\eta(\varepsilon)>0$ such that for every
	$\kappa\ge \kappa_\star(\varepsilon)$ one can find a channel $k$ and a sign
	$\sigma\in\{+,-\}$ with
	\[
	V\!\bigl(\Phi_{u_k=\sigma \kappa}(\tau,\rho(t_n))\bigr)\le V(t_n)-\eta(\varepsilon).
	\]
	Since the double-probe controller selects the sign yielding the smaller one-step
	value, it achieves at least this decrease whenever the corresponding gain satisfies
	$\kappa_k(t_n)\ge \kappa_\star(\varepsilon)$. Hence, once the gains are above the
	threshold, every time the trajectory satisfies $V(t_n)\ge\varepsilon$ (and $V$ is not
	locally constant on $[t_{n-1},t_n)$) the closed loop enforces the uniform one-step decrease
	\[
	V(t_{n+1})\le V(t_n)-\eta(\varepsilon).
	\]
	
	Now use the adaptive gain update. Whenever the observed decrease is insufficient, the
	update
	\[
	\kappa_k(t_{n+1})=\kappa_k(t_n)+\alpha_k\,|V(t_n)-V(t_{n-1})|
	\]
	increases the gains. Therefore, unless the exceptional case of local constancy persists,
	each gain eventually exceeds $\kappa_\star(\varepsilon)$ after finitely many non-descent
	events.
	
	Suppose, by contradiction, that $V(t_n)\ge \varepsilon$ for infinitely many indices $n$.
	For all sufficiently large such indices the gain threshold is exceeded, so each such visit
	produces a decrease by at least $\eta(\varepsilon)$. After $r$ such visits we would obtain
	\[
	V(t_n)\le V(t_{n_0})-r\,\eta(\varepsilon),
	\]
	which is impossible for arbitrarily large $r$ because $V\ge 0$. Hence $V(t_n)\ge\varepsilon$
	can occur only finitely many times, i.e.\ there exists $N_\varepsilon$ such that
	$V(t_n)\le \varepsilon$ for all $n\ge N_\varepsilon$.
	
	Since $\varepsilon>0$ was arbitrary, it follows that $\lim_{n\to\infty}V(t_n)=0$.
\end{proof}

\begin{remark}
	This lemma highlights the central mechanism enabling model-free stabilization
	in the sampled-data setting.
	It shows that a purely measurement-driven, derivative-free update rule can
	reliably extract a descent direction from finite-difference information alone,
	provided the available control Hamiltonians generate nontrivial dynamics at the
	current sampled state.
	
	The double-probe implementation used in Section~\ref{sec:Example_Qubit_Stabilization} can be viewed as a practical realization
	of the one-step comparison in~\eqref{eq:uniform_descent_levelset}, where the sign (and, if
	desired, the channel) is selected based on finite-horizon Lyapunov evaluations under opposite
	probing actions. A pseudo-gradient variant obtained from symmetric probes leads to the same
	level-set descendability requirement and is covered by the same uniform margin hypothesis.
\end{remark}

\begin{remark}[On the role of uniform level-set descendability]
	Lemma~\ref{lemma:descent} relies on a \emph{uniform one-step descendability}
	assumption formulated on positive level sets of the Lyapunov observable.
	This assumption should be interpreted as a controllability-type requirement
	expressed in Lyapunov coordinates: away from the target set $\{V=0\}$, the
	available control directions must allow a finite-horizon decrease of $V$
	by a margin that is uniform over each level set $\{V\ge\varepsilon\}$.
	
The adaptive gain mechanism does not create descent directions; it ensures that,
whenever such directions exist, they are eventually exploited through
sufficiently large probing amplitudes. Without uniformity on level sets,
strictly positive plateau values of $V$ could not be excluded using
finite-difference information alone.

	This assumption is natural in the present information-limited setting and
	is the finite-difference analogue of the uniform descent or detectability
	conditions commonly invoked in sampled-data and input-to-state stability
	analyses. An analogous level-set uniformity hypothesis appears explicitly
	in Section~\ref{sec:iss} when establishing practical (ISS-type) stability in the presence
	of unknown drift.
\end{remark}

We now quantify how the adaptive gain mechanism prevents the system from
remaining indefinitely on any strictly positive plateau of $V$.

\begin{lemma}
	\label{lemma:gain-growth}
	Assume drift-free dynamics,
	\[
	\dot{\rho}(t)=\sum_k u_k(t)\,[-iH_k,\rho(t)].
	\]
	Let $V(t)$ be an adaptive Lyapunov observable. Suppose that the controller is
	implemented in sampled-data form with sampling period $\tau>0$, sampling
	instants $t_n:=t_0+n\tau$, and zero-order hold. The control inputs and gains are given by
	\[
	u_k(t)=u_k(t_n)
	=-\kappa_k(t_n)\,
	\operatorname{sign}\!\left(V(t_n)-V(t_{n-1})\right),
	\qquad t\in[t_n,t_{n+1}),
	\]
	\[
	\kappa_k(t_{n+1})
	=\kappa_k(t_n)+\alpha_k\,|V(t_n)-V(t_{n-1})|,
	\qquad \alpha_k>0.
	\]
	Assume further that $V$ is not eventually locally constant along the sampling
	sequence, i.e.\ $V(t_n)-V(t_{n-1})\neq 0$ for infinitely many $n$. Then:
	\begin{enumerate}
		\item If $\Delta V(t_n):=V(t_n)-V(t_{n-1})\ge 0$ occurs for infinitely many
		sampling instants, then $\kappa_k(t_n)\to\infty$.
		\item Consequently, under the drift-free dynamics and the \emph{uniform level-set}
		descent mechanism of Lemma~\ref{lemma:descent}, the sampled Lyapunov sequence
		$V(t_n)$ cannot converge to any strictly positive limit.
	\end{enumerate}
\end{lemma}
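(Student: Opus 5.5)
The plan is to treat the two items separately. Item 1 concerns only the gain recursion. Item 2 then follows by combining item 1 with the uniform level-set descent mechanism of Lemma~\ref{lemma:descent}.

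\emph{Item 1.} First I would note that the update $\kappa_k(t_{n+1})=\kappa_k(t_n)+\alpha_k|\Delta V(t_n)|$ makes each $\kappa_k(t_n)$ nondecreasing. Telescoping gives
\[
\kappa_k(t_n)=\kappa_k(t_0)+\alpha_k\sum_{j<n}|\Delta V(t_j)| .
\]
Hence $\kappa_k(t_n)\to\infty$ is equivalent to $\sum_j|\Delta V(t_j)|=\infty$. I expect this equivalence to be the main obstacle. The hypotheses say only that $\Delta V(t_n)\ge 0$ and $\Delta V(t_n)\neq0$ occur infinitely often. If the nonzero increments were allowed to be arbitrarily small, the increments could be summable and $\kappa_k$ could stay bounded. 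To close the gap I would use the sampled, measurement-derived nature of $V$. The values $V(t_n)$ are estimated from outcome frequencies over a finite window, so they lie on a finite resolution grid. That gives a $\delta>0$ with $|\Delta V(t_n)|\ge\delta$ whenever $\Delta V(t_n)\neq 0$. I would state this resolution property explicitly, as the interpretation of ``not locally constant'' in the observable descent condition. Each of the infinitely many nonzero non-descent events then adds at least $\alpha_k\delta$ to $\kappa_k$, so $\kappa_k(t_n)\to\infty$. Without such a resolution bound, I would fall back on a contradiction argument. Bounded gains force $\sum_j|\Delta V(t_j)|<\infty$, so $V(t_n)$ has bounded variation and converges. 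The oscillation implied by infinitely many sign changes of $\Delta V$ would then have to be ruled out separately.

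\emph{Item 2.} I would argue by contradiction and suppose $V(t_n)\to L>0$. Fix $\varepsilon:=L/2$. Then $V(t_n)\ge\varepsilon$ for all $n\ge n_0$. Two cases arise.

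\textbf{Case (a): $\Delta V(t_n)\ge0$ infinitely often.} Item 1 gives $\kappa_k(t_n)\ge\kappa_\star(\varepsilon)$ for all large $n$. The uniform descendability \eqref{eq:uniform_descent_levelset}, exploited by the double-probe selection exactly as in the proof of Lemma~\ref{lemma:descent}, then yields $V(t_{n+1})\le V(t_n)-\eta(\varepsilon)$ at every non-constant step. Since $V\ge0$, this can happen only finitely often. Together with the non-constancy hypothesis, this contradicts the assumption that $V(t_n)\ge\varepsilon$ for all large $n$.

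\textbf{Case (b): $\Delta V(t_n)<0$ eventually.} I would not try to reprove descent from scratch. Instead, I would observe that the standing hypotheses are exactly those of Lemma~\ref{lemma:descent}: drift-free dynamics, uniform level-set descendability, and gain amplification. That lemma gives $V(t_n)\le\varepsilon$ for all $n\ge N_\varepsilon$, which contradicts $V(t_n)\ge 2\varepsilon-o(1)$.

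Both cases are impossible, so no strictly positive limit exists. The only delicate ingredient is the divergence of $\kappa_k$ in item 1, where I would rely on the quantization lower bound $\delta$.
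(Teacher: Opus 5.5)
\paragraph{Item~1: the gap.} You have found the real difficulty, but you do not resolve it from the stated hypotheses. The conclusion of item~1 is equivalent to $\sum_j|\Delta V(t_j)|=\infty$. Infinitely many nonzero increments do not give this.

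The paper's proof just asserts divergence ``because the sum contains infinitely many strictly positive terms''. That is a non sequitur, so your diagnosis is sharper than the paper's own argument.

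Your repair, a resolution floor $|\Delta V(t_n)|\ge\delta$ whenever $\Delta V(t_n)\neq0$, is a new hypothesis. It appears nowhere in the statement, and the paper's simulations evaluate $V$ exactly from $\rho$.

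Your fallback produces no contradiction either. Bounded gains give $\sum_j|\Delta V(t_j)|<\infty$, and a sequence with summable increments can still change the sign of its increments infinitely often.

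\paragraph{Item~1 is false as written.} Take one channel $H_1=\tfrac12\sigma_x$ and $V=1-\Tr(P_0\rho)$. Start from a Bloch vector in the $y$--$z$ plane with $y>0$ and $z\approx0$. Choose $\alpha_1$ small and $\kappa_1(t_0)\tau=2\pi-\delta_0$ with $\delta_0>0$ small.
\begin{itemize}
\item The held input on $[t_n,t_{n+1})$ then acts on the Bloch vector as a rotation about the $x$-axis by $\pm\delta_n$, where $\delta_n:=2\pi-\kappa_1(t_n)\tau$.
\item The gain law becomes $\delta_{n+1}=\delta_n-c_n\delta_{n-1}$, with $c_n>0$ small and bounded away from $0$.
\item Hence $\delta_n\downarrow0$ geometrically, and $\Delta V(t_n)$ is nonzero and alternates in sign at every step.
\item Yet $\kappa_1(t_n)\uparrow2\pi/\tau$ stays bounded.
\end{itemize}
So an assumption of your type is unavoidable. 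State it as an added hypothesis, not as an ``interpretation'' of local constancy.

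\paragraph{Item~2.} Your case~(a) is the paper's argument. Like the paper, it tacitly replaces this lemma's sign rule by the double-probe $\arg\min$ selection of Lemma~\ref{lemma:descent}. Under the sign rule, descendability only says that \emph{some} sign decreases $V$, not that the chosen sign does.

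For case~(b), the paper claims that an eventually strictly decreasing sequence cannot converge to a positive limit. That is false, and you are right to avoid it. But citing Lemma~\ref{lemma:descent} instead only relocates the gap. That lemma's proof also needs the gains to exceed $\kappa_\star(\varepsilon)$. In case~(b), $V(t_n)$ decreases to $L$, so the tail of $\sum_n|\Delta V(t_n)|$ is finite. The gains therefore converge to a finite value, and nothing forces that value above the threshold.

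Moreover, if Lemma~\ref{lemma:descent} is taken as a black box, it gives $V(t_n)\to0$ outright and settles both cases at once. Item~1 then plays no role in your item~2.
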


\begin{proof}
	We work with a sampled-data (zero-order hold) implementation with sampling
	period $\tau>0$. Measurements of the Lyapunov observable are available only at
	sampling instants $t_n := t_0 + n\tau$, and the finite-difference increment is
	\[
	\Delta V(t_n) := V(t_n)-V(t_{n-1}).
	\]
	
	\medskip
	\noindent
	\emph{Proof of (1).}
	If $\Delta V(t_n)\ge 0$ for infinitely many sampling instants and $V$ is not locally
	constant on the corresponding sampling windows, then $|\Delta V(t_n)|>0$ for
	infinitely many $n$. The gain recursion
	\[
	\kappa_k(t_{n+1})
	=\kappa_k(t_n)+\alpha_k|\Delta V(t_n)|,\qquad \alpha_k>0,
	\]
	implies
	\[
	\kappa_k(t_n)\ge \kappa_k(t_0)+\alpha_k\sum_{i=0}^{n-1}|\Delta V(t_i)|\to+\infty,
	\]
	because the sum contains infinitely many strictly positive terms.
	
	\medskip
	\noindent
	\emph{Proof of (2).}
	Suppose by contradiction that
	\[
	V(t_n)\to V_\star>0.
	\]
	Set $\varepsilon:=V_\star/2>0$. Then there exists $N_0$ such that
	\[
	V(t_n)\ge \varepsilon \qquad \text{for all } n\ge N_0.
	\]
	
	Since $V$ is not eventually locally constant, there are infinitely many indices
	$n\ge N_0$ for which $V$ is not locally constant on $[t_{n-1},t_n)$. Moreover, if
	$\Delta V(t_n)\ge 0$ occurred only finitely many times, then $\Delta V(t_n)<0$ for
	all sufficiently large $n$, implying that $\{V(t_n)\}$ is eventually strictly
	decreasing and hence cannot converge to a positive constant. Therefore,
	$\Delta V(t_n)\ge 0$ must occur infinitely often, and by part~(1) we obtain
	$\kappa_k(t_n)\to\infty$.
	
	Apply Lemma~\ref{lemma:descent} with the fixed level $\varepsilon>0$. It yields
	constants $\kappa_\star(\varepsilon)>0$ and $\eta(\varepsilon)>0$ such that whenever
	$V(t_n)\ge\varepsilon$ and $V$ is not locally constant on $[t_{n-1},t_n)$, any gain
	$\kappa\ge\kappa_\star(\varepsilon)$ admits a one-step decrease by at least
	$\eta(\varepsilon)$ under the double-probe selection. Since $\kappa_k(t_n)\to\infty$,
	there exists $N_1$ such that for all $n\ge N_1$ the gains exceed $\kappa_\star(\varepsilon)$.
	Hence for infinitely many indices $n\ge \max\{N_0,N_1\}$ we have
	\[
	V(t_{n+1})\le V(t_n)-\eta(\varepsilon).
	\]
	
	After $r$ such decrease events,
	\[
	V(t_n)\le V(t_{N})-r\,\eta(\varepsilon),
	\]
	which is impossible for arbitrarily large $r$ because $V(t_n)\ge 0$. This contradiction
	shows that $V_\star>0$ is impossible.
\end{proof}

\subsection{Stability Result}

We now establish asymptotic stabilization of the closed-loop system under the
assumption that the intrinsic dynamics contain \emph{no drift term}, i.e.
\[
\dot{\rho}(t)
= \sum_k u_k(t)\,[-iH_k,\rho(t)].
\]
In this setting, sufficiently large control amplitudes dominate the evolution,
which is crucial for proving strict Lyapunov descent at the sampling instants.
The analysis proceeds in four steps:
\begin{enumerate}
	\item showing that the adaptive gain mechanism prevents stagnation at any
	strictly positive Lyapunov value;
	\item proving the existence of the limit
	$\displaystyle V_\infty=\lim_{n\to\infty}V(t_n)$ along the sampling sequence;
	\item showing that no strictly positive limit is consistent with the closed-loop
	behavior;
	\item concluding that the quantum state converges to the target projector
	because $V$ is a proper Lyapunov observable.
\end{enumerate}

Lemma~\ref{lemma:gain-growth} guarantees that the Lyapunov observable cannot
remain at any strictly positive level at the sampling instants: whenever
$\Delta V(t_n)=V(t_n)-V(t_{n-1})$ fails to be negative, the gains increase,
strengthening the subsequent corrective action.
Repeated non-descent forces $\kappa_k(t_n)\to\infty$, which in the drift-free
setting ensures strict decrease of $V(t_n)$ after a finite transient.
Hence the closed-loop system cannot remain on a plateau $V_\star>0$ along the
sampling sequence.

We next show that the Lyapunov signal admits a limit along the sampling instants.

\begin{lemma}
	\label{lemma:limit-exists}
	Let $t_n=t_0+n\tau$ be the sampling instants. Assume that:
	\begin{enumerate}
		\item $0 \le V(t_n) \le V_{\max} < \infty$ for all $n$;
		\item the adaptive sign-based controller is applied under drift-free dynamics;
		\item there exists $N\in\mathbb{N}$ such that for all $n\ge N$,
		\[
		V(t_{n+1}) \le V(t_n).
		\]
	\end{enumerate}
	Then the limit
	\[
	V_\infty := \lim_{n\to\infty} V(t_n)
	\]
	exists and is finite.
\end{lemma}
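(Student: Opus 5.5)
The statement follows from the monotone convergence theorem for real sequences; the controller structure (assumption~2) plays no role beyond guaranteeing that the closed-loop sampled sequence $\{V(t_n)\}$ is well defined. I would work with the tail sequence $a_j := V(t_{N+j})$, $j\ge 0$, where $N$ is the index from assumption~3. This removes the finite transient $n<N$, during which no monotonicity is assumed. Since the limit of a sequence does not depend on finitely many initial terms, existence of $\lim_{j\to\infty} a_j$ is equivalent to existence of $\lim_{n\to\infty} V(t_n)$, and the two limits coincide.

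\emph{Step 1 (monotonicity).} By assumption~3, $a_{j+1}\le a_j$ for all $j\ge 0$, so $\{a_j\}$ is nonincreasing.

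\emph{Step 2 (boundedness from below).} By assumption~1, $a_j\ge 0$ for all $j$. The upper bound $V_{\max}$ is not strictly needed for the tail, since $a_j\le a_0=V(t_N)\le V_{\max}$. It only serves to record that every term, including those in the transient, is finite.

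\emph{Step 3 (conclusion).} A nonincreasing sequence of reals that is bounded below converges to its infimum. I would spell this out briefly: set $V_\infty:=\inf_j a_j\in[0,V(t_N)]$. Given $\delta>0$, pick $j_0$ with $a_{j_0}<V_\infty+\delta$. Then for every $j\ge j_0$ monotonicity gives
\[
V_\infty\le a_j\le a_{j_0}<V_\infty+\delta .
\]
Hence $a_j\to V_\infty$, and therefore $V(t_n)\to V_\infty$ with $0\le V_\infty\le V_{\max}<\infty$.

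I do not expect a genuine obstacle here. The only point needing care is handling the pre-$N$ transient correctly, and noting that the finite-difference exceptions allowed elsewhere in the paper (local constancy, $\Delta V(t_n)=0$) are already covered by the non-strict inequality in assumption~3. It is also worth remarking that the lemma only asserts existence of the limit, not $V_\infty=0$. Ruling out $V_\infty>0$ is the task of Lemma~\ref{lemma:gain-growth}(2), which is the next step in the four-step program.
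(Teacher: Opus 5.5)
Your proof is correct and follows the paper's own argument: the paper simply notes that $\{V(t_n)\}$ is bounded below by $0$ and eventually nonincreasing, and concludes by monotone convergence. Your version spells out the same reasoning in more detail, through the explicit tail reduction past index $N$ and the $\varepsilon$--infimum argument.
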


\begin{proof}
	By assumption, the sequence $\{V(t_n)\}_{n\in\mathbb{N}}$ is bounded below by $0$
	and eventually nonincreasing. Hence it converges to a finite limit $V_\infty\ge 0$.
\end{proof}

\begin{remark}
	Throughout the paper, the term \emph{sign-based controller} is used to emphasize
	that only the direction of Lyapunov descent, obtained from finite-difference
	evaluations, is exploited for feedback.
	The double-probe pseudo-gradient implementation employed in the simulations
	constitutes a smooth realization of this sign-consistent descent mechanism and
	is fully consistent with the theoretical framework.
\end{remark}

We now show that the only admissible limit is zero.

\begin{theorem}
	\label{thm:V-to-zero}
	Let the controller be implemented in sampled-data form with sampling period
	$\tau>0$ and sampling instants $t_n=t_0+n\tau$. Assume that:
	\begin{enumerate}
		\item $0 \le V(t) \le V_{\max}$ for all $t\ge t_0$;
		\item the adaptive controller enforces one-step finite-difference descent at
		the sampling instants whenever $V$ is not locally constant on the preceding
		sampling window, i.e.\ whenever $V(t_n)-V(t_{n-1})\neq 0$ it eventually holds that
		\[
		V(t_{n+1})-V(t_n)<0;
		\]
		\item stagnation of the sampled sequence above any strictly positive value is
		impossible (Lem\-ma~\ref{lemma:gain-growth});
		\item the limit $V_\infty := \lim_{n\to\infty} V(t_n)$ exists
		(Lemma~\ref{lemma:limit-exists}).
	\end{enumerate}
	Then
	\[ 
	V_\infty = 0.
	\]
\end{theorem}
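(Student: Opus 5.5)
The argument is by contradiction and rests mainly on hypotheses~3 and~4, which are already packaged by Lemmas~\ref{lemma:gain-growth} and~\ref{lemma:limit-exists}. By hypothesis~4 the limit $V_\infty=\lim_{n\to\infty}V(t_n)$ exists. By hypothesis~1 together with $V\ge 0$ it satisfies $0\le V_\infty\le V_{\max}$. It therefore suffices to exclude $V_\infty>0$.

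Suppose $V_\infty>0$ and set $\varepsilon:=V_\infty/2$. Choose $N_0$ such that $V(t_n)\ge\varepsilon$ for all $n\ge N_0$. We then split into two cases according to whether the sampled sequence is eventually locally constant.

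\emph{Case~(a): $\Delta V(t_n)=0$ for all large $n$.} Then $V(t_n)=V_\infty$ eventually, so the sampled trajectory stagnates at a strictly positive value. Hypothesis~3 forbids exactly this.

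\emph{Case~(b): $\Delta V(t_n)\neq0$ for infinitely many $n$.} Here we apply part~(2) of Lemma~\ref{lemma:gain-growth}, which says that $\{V(t_n)\}$ cannot converge to a strictly positive limit. This contradicts $V(t_n)\to V_\infty>0$.

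If one prefers not to invoke that lemma as a black box, hypothesis~2 can be used directly. For $n\ge N_0$ with $\Delta V(t_n)\ne0$ it gives strict descent at the next step. Combining this with the uniform level-set margin $\eta(\varepsilon)$ from Lemma~\ref{lemma:descent}, one obtains $V(t_{n+1})\le V(t_n)-\eta(\varepsilon)$ infinitely often. Summing $r$ such decrements against the lower bound $V\ge0$ yields the contradiction.

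The main obstacle is that hypothesis~2 on its own only yields \emph{strict} decrease, and a strictly decreasing sequence can of course converge to a positive number. Strict decrease therefore cannot close the argument by itself. The quantitative input has to come from hypothesis~3, that is, from the uniform margin $\eta(\varepsilon)$ on the level set $\{V\ge\varepsilon\}$ inherited through Lemma~\ref{lemma:gain-growth}. I would make this dependence explicit: hypothesis~3 is read as ``no convergence to, or stagnation at, any positive level,'' which covers both cases above. I would also note that hypothesis~2 serves mainly to ensure that the exceptional locally constant windows do not interfere with the counting argument. With $V_\infty>0$ excluded in both cases, we conclude $V_\infty=0$.
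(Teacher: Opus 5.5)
Your argument is correct and is essentially the paper's proof. The paper also assumes $V_\infty>0$ and contradicts hypothesis~3 through part~(2) of Lemma~\ref{lemma:gain-growth}; your split into the eventually-constant case (excluded directly by hypothesis~3) and the case $\Delta V(t_n)\neq 0$ infinitely often (where that lemma's standing assumption actually holds) is a slightly more careful version of the paper's one-line appeal to the lemma.

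Do not rely on your optional ``direct'' route, however. The margin $\eta(\varepsilon)$ of Lemma~\ref{lemma:descent} is only available under the double-probe rule once the gains exceed $\kappa_\star(\varepsilon)$, and neither is a hypothesis of this theorem. Moreover, if $V(t_n)$ is eventually nonincreasing, the gain increments $\alpha_k|\Delta V(t_n)|$ telescope to a finite total, so that threshold need never be reached.
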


\begin{proof}
	Assume for contradiction that $V_\infty>0$. Then there exists $\varepsilon>0$
	and an index $N$ such that $V(t_n)\ge V_\infty-\varepsilon>0$ for all $n\ge N$.
	By Lemma~\ref{lemma:gain-growth}, the sampled Lyapunov sequence cannot converge
	to a strictly positive plateau under the adaptive mechanism, i.e.\ stagnation
	above any positive level is impossible. This contradicts $V(t_n)\to V_\infty>0$.
	Hence $V_\infty=0$.
\end{proof}

The previous results establish that, under drift--free dynamics and the adaptive
sign-based controller, the Lyapunov observable converges to zero along the
sampling instants,
\[
\lim_{n\to\infty} V(t_n)=0, \qquad t_n =t_0+ n\tau .
\]
Since the closed-loop evolution is continuous on each sampling interval
$[t_n,t_{n+1})$ under zero-order hold, this guarantees asymptotic stabilization
of the quantum system at the sampling times. In particular, the state satisfies
\(
\rho(t_n)\to \dyad{\psi}
\)
as $n\to\infty$ whenever the Lyapunov observable is proper.

\begin{remark}
The requirement that the adaptive gains become ``sufficiently large'' should be interpreted in a control-theoretic sense. It does not imply that arbitrarily large or physically unrealistic control amplitudes are available. Instead, it asserts the existence of a gain threshold above which the control-induced variation of the Lyapunov observable dominates the unknown drift or disturbance whenever such domination is physically feasible.
	
In realistic quantum systems, control amplitudes are always bounded by hardware constraints. The present analysis is compatible with such bounds: if the admissible control amplitudes exceed the threshold required for Lyapunov descent, asymptotic stabilization is achieved in the drift-free case; otherwise, the closed-loop behavior naturally transitions to practical (ISS-type) stabilization, as analyzed in Section~\ref{sec:iss}.

\end{remark}

\begin{theorem}[Asymptotic Stabilization in the Drift--Free Case]
	\label{thm:stabilization-driftfree}
	Assume drift--free dynamics
	\[
	\dot{\rho}(t)=\sum_k u_k(t)\,[-iH_k,\rho(t)],
	\]
	and suppose that:
	\begin{enumerate}
		\item the available Hamiltonians $\{H_k\}$ generate nontrivial control
		directions at every $\rho\neq P_\psi$;
		\item $V(t)$ is a proper Lyapunov observable, i.e.,
		\[
		V(t)=0 \quad \text{if and only if} \quad \rho(t)=P_\psi;
		\]
		\item the adaptive sign-based controller enforces finite-difference descent
		whenever $V$ is not locally constant on a sampling interval.
	\end{enumerate}
	Then, along the sampling instants $t_n=t_0+n\tau$,
	\[
	\lim_{n\to\infty} V(t_n)=0,
	\]
	and consequently
	\[
	\rho(t_n)\to P_\psi \qquad \text{as } n\to\infty.
	\]
\end{theorem}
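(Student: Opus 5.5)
The proof assembles the lemmas already established and then passes from convergence of $V$ to convergence of the state. The first goal is $\lim_{n\to\infty}V(t_n)=0$. Since the state space $\mathcal{D}(\mathcal{H})$ is compact and $V$ is continuous, $V$ is bounded, $0\le V(t)\le V_{\max}$. This gives hypothesis~1 of Theorem~\ref{thm:V-to-zero}.

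Hypotheses~1 and~3 of the present statement are read as the uniform level-set descendability~\eqref{eq:uniform_descent_levelset} of Lemma~\ref{lemma:descent}. Nontrivial control directions at every $\rho\neq P_\psi$ ensure that some probe $\sigma\kappa$ on some channel strictly lowers $V$ away from the target. On the compact sets $\{V\ge\varepsilon\}$, the margin $\eta(\varepsilon)$ can then be made uniform, and the threshold $\kappa_\star(\varepsilon)$ finite.

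With this in hand, the plan proceeds in two cases.

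\emph{Case one: $V$ is eventually locally constant along the sampling sequence.} Then $V(t_n)$ is eventually a constant $V_\star$. If $V_\star>0$, this contradicts hypothesis~1, since a nontrivial direction together with the gain growth prescribed in Lemma~\ref{lemma:descent} forces a strict change. So $V_\star=0$.

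\emph{Case two: $V$ is not eventually locally constant.} Here Lemma~\ref{lemma:descent} applies directly and yields, for each $\varepsilon>0$, an index $N_\varepsilon$ with $V(t_n)\le\varepsilon$ for all $n\ge N_\varepsilon$. Hence $V(t_n)\to0$. Alternatively, hypothesis~3 makes $V(t_n)$ eventually nonincreasing. Lemma~\ref{lemma:limit-exists} then gives a limit $V_\infty$, and Lemma~\ref{lemma:gain-growth}(2) together with Theorem~\ref{thm:V-to-zero} exclude $V_\infty>0$.

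The second goal is to pass from $V(t_n)\to0$ to $\rho(t_n)\to P_\psi$, using properness and compactness. Suppose not. Then some subsequence $\rho(t_{n_j})$ stays at distance $\ge\delta$ from $P_\psi$. By compactness it has a further subsequence converging to some $\bar\rho\neq P_\psi$. Continuity of $V$ gives $V(\bar\rho)=\lim V(t_{n_j})=0$, which contradicts hypothesis~2 ($V=0$ only at $P_\psi$). For the canonical choice $V=1-\Tr(P_\psi\rho)$ this step is immediate, since $\Tr(P_\psi\rho)\to1$ forces $\rho\to P_\psi$ in trace norm.

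The main obstacle is the first goal: translating the qualitative hypothesis~1 into the quantitative uniform margin $\eta(\varepsilon)$ and threshold $\kappa_\star(\varepsilon)$ required by Lemma~\ref{lemma:descent}. The approach is a compactness argument: for each $\rho$ in $\{V\ge\varepsilon\}$, choose a channel and sign giving strict decrease at some gain; by continuity of the flow $\Phi$ in $\rho$, this decrease persists on a neighbourhood of $\rho$; then take a finite subcover. One subtlety remains. Descent at one gain does not imply descent for all larger $\kappa$, because unitary flows rotate and may overshoot. If the covering argument cannot handle this, the fallback is to state explicitly that hypothesis~3 is understood as including~\eqref{eq:uniform_descent_levelset}, which is how the paper's remarks interpret it.
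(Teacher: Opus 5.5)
Your Case~two reasoning and your subsequence argument for $\rho(t_n)\to P_\psi$ are exactly the paper's proof. That reasoning is: hypothesis~3 makes $\{V(t_n)\}$ eventually nonincreasing, Lemma~\ref{lemma:limit-exists} gives the limit, and Lemma~\ref{lemma:gain-growth}(2) with Theorem~\ref{thm:V-to-zero} rules out a positive limit. The paper simply cites Lemma~\ref{lemma:limit-exists}, Theorem~\ref{thm:V-to-zero}, and properness of the continuous $V$ on the compact set $\mathcal D(\mathcal H)$. Your remark that hypothesis~3 supplies the monotonicity Lemma~\ref{lemma:limit-exists} needs makes explicit what the paper leaves implicit.

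The step that fails is Case~one. If $\Delta V(t_n)=0$, the gain update adds $\alpha_k|\Delta V(t_n)|=0$, so there is no gain growth. With $\operatorname{sign}(0)=0$ the applied input is $u_k\equiv 0$ on $[t_n,t_{n+1})$. Under drift-free dynamics the state is then frozen, $\Delta V$ stays zero, and $V(t_n)\equiv V_\star$ for all later $n$. Hypothesis~1 only says nontrivial directions are available, not that the controller uses them. Hypothesis~3 is vacuous precisely when $V$ is locally constant. So nothing contradicts $V_\star>0$: the frozen trajectory is compatible with hypotheses~1--3 as stated.

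The paper does not derive this case either; it excludes it by assumption. The no-stagnation hypothesis of Theorem~\ref{thm:V-to-zero} is imported from Lemma~\ref{lemma:gain-growth}, which explicitly assumes $V$ is not eventually locally constant. You should dispose of Case~one by invoking that standing assumption openly, not by claiming a contradiction with hypothesis~1.

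Your compactness plan for deriving \eqref{eq:uniform_descent_levelset} from hypothesis~1 should be abandoned, for a more basic reason than overshoot. Drift-free evolution is unitary, so it preserves the spectrum of $\rho$. For a mixed $\rho$ the unitary orbit $\{U\rho U^\dagger\}$ is compact and does not contain $P_\psi$, so $V$ attains a positive minimum on it. A minimizing state lies in $\{V\ge\varepsilon\}$ for small $\varepsilon$, and there no probe of any channel, sign or gain can lower $V$. Hypothesis~1 already fails at $\rho=I/n$, where every $[-iH_k,\rho]$ vanishes. So the pointwise-descent premise of your covering argument is false on part of $\{V\ge\varepsilon\}$, and no uniform margin can be extracted.

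The paper never attempts such a derivation; the uniform margin enters only through the cited lemmas. Your fallback, reading hypothesis~3 as including uniform level-set descendability, is the reading the paper's proof actually rests on.
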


\begin{proof}
	By Lemma~\ref{lemma:limit-exists}, the limit
	$V_\infty=\lim_{n\to\infty}V(t_n)$ exists and is finite.
	By Theorem~\ref{thm:V-to-zero}, one has $V_\infty=0$, i.e., $V(t_n)\to 0$.
	Since $V(\rho)$ is continuous on the compact set $\mathcal D(\mathcal H)$ and
	$V(\rho)=0$ iff $\rho=P_\psi$, it follows that $V(t_n)\to0$ implies
	$\rho(t_n)\to P_\psi$.
\end{proof}

\begin{corollary}[Stabilization under Projective Measurement]
	Let the measurement be the projective POVM
	$\{P_\psi,\, I-P_\psi\}$ with $P_\psi=\dyad{\psi}$, and define
	$V(t)=1-\Tr(P_\psi\rho(t))$.
	Under drift--free dynamics and the adaptive sampled-data controller based on
	one-step comparison of the two candidate inputs $\pm \kappa_k(t_n)$,
	$\rho(t_n)\to P_\psi$ as $n\to\infty$.
\end{corollary}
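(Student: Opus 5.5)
The corollary follows by specializing Theorem~\ref{thm:stabilization-driftfree} (equivalently Lemma~\ref{lemma:descent}) to the projective measurement and the fidelity-type observable. The plan is to check the three hypotheses of that theorem for $V(\rho)=1-\Tr(P_\psi\rho)$. The continuous one-step comparison of $\pm\kappa_k(t_n)$ is exactly the double-probe rule of Lemma~\ref{lemma:descent}. After verification, the conclusion $\rho(t_n)\to P_\psi$ is inherited directly.

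\emph{Step 1 (properness and measurability).} The observable $V(\rho)=1-\Tr(P_\psi\rho)$ is the probability of the outcome $I-P_\psi$, so it is obtained from the given POVM alone. It is affine, hence continuous on $\mathcal D(\mathcal H)$, and satisfies $0\le V\le 1$. For properness, $\Tr(P_\psi\rho)=\langle\psi|\rho|\psi\rangle=1$ with $\rho\succeq0$ and $\Tr\rho=1$ forces $\rho=P_\psi$. This is because the remaining eigenweight $\Tr((I-P_\psi)\rho)=0$ together with positivity kills all off-diagonal blocks. This verifies hypothesis~2 and the bound $V_{\max}=1$ used in Lemma~\ref{lemma:limit-exists}.

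\emph{Step 2 (descent).} Under drift-free dynamics with constant input $u_k=\sigma\kappa$, the flow is $\rho\mapsto e^{-i\sigma\kappa\tau H_k}\rho\, e^{i\sigma\kappa\tau H_k}$. The double-probe selection picks the smaller of the two resulting values of $V$, so $V(t_{n+1})\le V(t_n)$ is obtained whenever one of the probes decreases $V$. To obtain the uniform margin \eqref{eq:uniform_descent_levelset}, I would use compactness of $\{\rho: V(\rho)\ge\varepsilon\}$. I would combine this with hypothesis~1 of Theorem~\ref{thm:stabilization-driftfree}, which says the $H_k$ generate nontrivial directions away from $P_\psi$, interpreted as: for each such $\rho$ some channel and sign, for suitable amplitude, lowers $V$. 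Continuity of $(\rho,\kappa)\mapsto V(\Phi_{\sigma\kappa}(\tau,\rho))$ then upgrades this pointwise decrease to a uniform $\eta(\varepsilon)$ on a finite subcover. Lemma~\ref{lemma:descent} and Lemma~\ref{lemma:gain-growth} then give $V(t_n)\to0$, and Theorem~\ref{thm:V-to-zero} excludes positive limits.

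\emph{Step 3 (conclusion).} By Step 1, $V(t_n)\to0$ gives $\Tr((I-P_\psi)\rho(t_n))\to0$. Hence $\|\rho(t_n)-P_\psi\|_1\le 2\sqrt{V(t_n)}$ by the standard fidelity/trace-distance bound for a pure target, so $\rho(t_n)\to P_\psi$.

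The main obstacle is Step 2. The uniform descendability must hold for all $\kappa\ge\kappa_\star$, yet unitary orbits are periodic in $\kappa$ (for instance, $\kappa\tau$ equal to a period of $e^{-iH_k t}$ returns the state). This makes \eqref{eq:uniform_descent_levelset} false literally for large $\kappa$. There is also a stationary case: if $[H_k,\rho]$ is orthogonal to the gradient for every $k$, no descent exists unless hypothesis~1 is read strongly. I would handle this by taking the descendability as the standing assumption of Theorem~\ref{thm:stabilization-driftfree}, as the paper does. Alternatively, I would verify it on the generic set where $\Tr(P_\psi[-iH_k,\rho])\neq0$ for some $k$, and note that the double-probe comparison never increases $V$ whenever $\kappa\tau$ is small enough that the first-order term dominates.
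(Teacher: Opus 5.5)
Your reduction is the same as the paper's: check that $V$ is proper, invoke the double-probe descent property of Lemma~\ref{lemma:descent}, and conclude with Theorem~\ref{thm:stabilization-driftfree}. Your Steps~1 and~3 are correct and more explicit than the paper's. The problem is Step~2, which is exactly the point the paper's one-line proof asserts without argument. Your periodicity objection is correct, and it does not need commensurate spectra. The map $\kappa\mapsto e^{-i\kappa\tau H_k}$ is almost periodic, so simultaneous Dirichlet approximation on the finitely many eigenvalues of all $H_k$ gives arbitrarily large $\kappa$ at which every probe unitary is within any prescribed distance of the identity. Hence \eqref{eq:uniform_descent_levelset} fails at every $\rho$, for every choice of $\{H_k\}$. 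Your compactness argument gives uniformity only for $\kappa$ in a bounded range. Your small-$\kappa\tau$ fallback is not available either: the gains are nondecreasing, they are raised exactly when descent fails, and nothing in the hypotheses keeps $\kappa_k\tau$ small.

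More fundamentally, the statement cannot be proved as posed, for arbitrary initial states in $\mathcal D(\mathcal H)$. Drift-free dynamics is unitary, $\rho(t_n)=U_n\rho(t_0)U_n^\dagger$, so the spectrum is conserved and $V(t_n)\ge 1-\lambda_{\max}(\rho(t_0))$ for all $n$. For any mixed initial state, $\rho(t_n)\to P_\psi$ is therefore impossible under any controller. At any $\rho$ with $\langle\psi|\rho|\psi\rangle=\lambda_{\max}(\rho)<1$, no probe of any channel or sign lowers $V$. So your premise that every $\rho$ with $V(\rho)\ge\varepsilon$ admits a decreasing probe is false for every small $\varepsilon$, however strongly you read hypothesis~1 of Theorem~\ref{thm:stabilization-driftfree}. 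The local-constancy escape clause does not help, since the conclusion $V(t_n)\to0$ itself fails. In addition, the corollary places no condition on $\{H_k\}$: if every $H_k$ commutes with $P_\psi$, then $V(t_n)\equiv V(t_0)$. Taking descendability as a standing assumption therefore only re-proves the conditional Theorem~\ref{thm:stabilization-driftfree}; it does not yield the corollary, and the paper's proof has the same defect. A correct version needs at least three things: pure initial states, an explicit controllability assumption on $\{H_k\}$, and a replacement for the ``for all $\kappa\ge\kappa_\star$'' margin that the gain law can actually deliver.
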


\begin{proof}
	The observable $V(\rho)=1-\Tr(P_\psi\rho)$ is continuous, bounded, and proper.
	Moreover, the candidate-comparison rule satisfies the one-step descent property
	of Lemma~\ref{lemma:descent} whenever $V$ is not locally constant on a sampling
	window. Hence the assumptions of
	Theorem~\ref{thm:stabilization-driftfree} hold and the claim follows.
\end{proof}

We now show that the stabilization mechanism is robust to bounded measurement
errors. While persistent measurement corruption prevents guaranteeing exact
asymptotic convergence, the adaptive sign-based controller still ensures
practical stabilization to a noise-dependent neighborhood of the target.

\begin{proposition}[Robustness to Bounded Measurement Errors]
	\label{prop:robustness-bounded}
	Assume drift-free dynamics and a sampled-data implementation with sampling
	period $\tau>0$ and sampling instants $t_n=t_0+n\tau$ (zero-order hold on
	$[t_n,t_{n+1})$).
	Let the controller use the perturbed measurement
	\[
	\widetilde V(t_n)=V(t_n)+\eta(t_n),
	\qquad |\eta(t_n)|\le \eta_{\max},
	\]
	and define the noisy finite difference
	\[
	\Delta\widetilde V(t_n)
	=\widetilde V(t_n)-\widetilde V(t_{n-1}).
	\]
	Under the noisy sign-based controller
	\[
	u_k(t_n)
	= -\kappa_k(t_n)\operatorname{sign}\!\bigl(\Delta\widetilde V(t_n)\bigr),
	\qquad
	\kappa_k(t_{n+1})
	= \kappa_k(t_n)+\alpha_k|\Delta\widetilde V(t_n)|,
	\]
	assume moreover that the drift-free one-step descent mechanism of
	Lemma~\ref{lemma:descent} holds in the following local form:
	there exist constants $\varepsilon>0$ and a gain threshold $\kappa^\star>0$
	such that whenever $\kappa_k(t_n)\ge\kappa^\star$ and $V$ is not locally constant
	on $[t_{n-1},t_n)$, at least one of the two opposite constant inputs
	$\pm \kappa_k(t_n)$ yields a one-step decrease of magnitude at least $\varepsilon$
	in the noiseless Lyapunov value over $[t_n,t_{n+1})$.
	
	Then there exists a constant $C>0$ (depending on $\tau$ and the controller
	parameters, and on the local descent margin $\varepsilon$) such that the sampled
	Lyapunov values satisfy the practical bound
	\[
	\limsup_{n\to\infty} V(t_n)\;\le\; C\,\eta_{\max}.
	\]
	In particular, for sufficiently accurate measurements ($\eta_{\max}$ small),
	the residual neighborhood can be made arbitrarily small; and it can also be
	reduced by enlarging the admissible bounds on the control inputs (when such
	bounds allow a larger effective one-step descent margin).
\end{proposition}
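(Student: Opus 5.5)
The plan is to compare the noisy closed loop with the noiseless descent mechanism of Lemma~\ref{lemma:descent} and to show that noise can corrupt the selected sign only when the true one-step change is of order $\eta_{\max}$. The basic inequality is
\[
\bigl|\Delta\widetilde V(t_n)-\Delta V(t_n)\bigr|\le |\eta(t_n)|+|\eta(t_{n-1})|\le 2\eta_{\max}.
\]
Hence the noisy decision (the noisy sign, or equivalently the noisy comparison of the two probes $\pm\kappa_k(t_n)$) coincides with the noiseless one whenever the true difference between the candidate Lyapunov values exceeds $2\eta_{\max}$. More precisely, if the noisy comparison picks the probe with the smaller observed value, the true value of the chosen probe exceeds the true minimum by at most $2\eta_{\max}$.

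The first step is gain growth. Because $|\Delta\widetilde V(t_n)|$ is accumulated in $\kappa_k$, I will argue as in part~(1) of Lemma~\ref{lemma:gain-growth}: either the observed differences are summable, or $\kappa_k(t_n)\to\infty$. In the summable case $\Delta\widetilde V(t_n)\to0$, so $|\Delta V(t_n)|\le 2\eta_{\max}+o(1)$. In the other case the gains eventually exceed $\kappa^\star$. I will treat this second case as the main one and dispose of the summable case by the level-set argument below, since persistent $V\ge C\eta_{\max}$ would force decreases that are visible through the noise.

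The second step is a one-step inequality in the regime $\kappa_k(t_n)\ge\kappa^\star$ with $V$ not locally constant. By the local descent hypothesis, some probe decreases the noiseless value by at least $\varepsilon$. The chosen probe therefore satisfies
\[
V(t_{n+1})\le V(t_n)-\varepsilon+2\eta_{\max}.
\]
If $2\eta_{\max}<\varepsilon/2$, this gives a uniform decrease of at least $\varepsilon/2$ at every such step. If instead $\eta_{\max}\ge\varepsilon/4$, the bound is trivial with $C:=4V_{\max}/\varepsilon$, because $V\le V_{\max}\le C\eta_{\max}$ (take $V_{\max}=1$ for $V=1-\Tr(P_\psi\rho)$).

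The third step combines these. In the small-noise regime I will use the telescoping contradiction of Lemma~\ref{lemma:descent}: $V$ cannot stay above the threshold forever, so it must enter the sublevel set. To control excursions afterwards, I will bound the increase over a single step: a wrong choice still satisfies $V(t_{n+1})\le\min_\sigma V(\Phi_{\sigma})+2\eta_{\max}$. I then need that the minimum is at most the current value, or at most the current value plus a term proportional to $\eta_{\max}$. The cleanest route is to read the local descent hypothesis as giving, at levels above a threshold $c\,\eta_{\max}$, a margin that dominates $2\eta_{\max}$, and, near the threshold, to bound the overshoot by $2\eta_{\max}$. This yields $\limsup V(t_n)\le c\,\eta_{\max}+2\eta_{\max}$, so $C$ depends on $\varepsilon$ and on the controller parameters through the threshold.

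I expect the main obstacle to be this excursion and overshoot control. The hypothesis provides a fixed margin $\varepsilon$ rather than one depending on the level, so the honest statement of the final constant will probably combine the two regimes, $C=\max\{4V_{\max}/\varepsilon,\,c+2\}$, with the limsup bound following from Step~2 together with a one-step overshoot estimate. The closing remarks in the statement then follow directly: $C\eta_{\max}\to0$ as $\eta_{\max}\to0$, and a larger admissible input bound enlarges $\varepsilon$, which decreases $C$.
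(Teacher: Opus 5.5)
Your skeleton matches the paper's: the bound $|\Delta\widetilde V(t_n)-\Delta V(t_n)|\le 2\eta_{\max}$, the descent margin $\varepsilon$ competing with it, and gain adaptation. Two of your additions improve on the paper. The dichotomy ``$\sum_n|\Delta\widetilde V(t_n)|<\infty$ or $\kappa_k(t_n)\to\infty$'' is more careful than the paper, which simply presupposes that the gains reach $\kappa^\star$. The regime $\eta_{\max}\ge\varepsilon/4$ is settled cleanly by $C=4V_{\max}/\varepsilon$.

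Your inequality $V(t_{n+1})\le V(t_n)-\varepsilon+2\eta_{\max}$, however, holds only for a rule that compares noisy evaluations of the two candidate values $V(\Phi_{\pm\kappa_k(t_n)}(\tau,\rho(t_n)))$. The controller in the statement instead applies $-\kappa_k(t_n)\operatorname{sign}(\Delta\widetilde V(t_n))$. The sign of this \emph{backward} difference says nothing about which of $\pm\kappa_k(t_n)$ decreases $V$ on $[t_n,t_{n+1})$, even for $\eta_{\max}=0$. For example, after a decrease obtained with $-\kappa_k$, the rule switches to $+\kappa_k$. So your ``or equivalently'' is false. The paper's proof makes the same silent identification (``the noisy sign rule selects a descent direction consistently''). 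At minimum you must say explicitly that you are analysing the double-probe selection of Lemma~\ref{lemma:descent} with noisy evaluations.

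The genuine gap is the final step. It lies less in the overshoot you single out than in two escape routes the hypothesis leaves open. The margin $\varepsilon$ is available only at steps with $\kappa_k(t_n)\ge\kappa^\star$ and $V$ not locally constant on $[t_{n-1},t_n)$.

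In your summable case, the gains converge to a finite limit that may lie below $\kappa^\star$. Then the hypothesis is never triggered, and deferring to ``the level-set argument below'' is circular: $|\Delta V(t_n)|\le 2\eta_{\max}+o(1)$ controls increments, not the level.

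In the divergent case with $\varepsilon>4\eta_{\max}$ (in your double-probe reading), once $\kappa_k(t_n)\ge\kappa^\star$ a non-constant window forces a decrease of at least $\varepsilon/2$ on the next window. That window is therefore again non-constant, and since $V\ge 0$ the chain cannot continue forever. Hence from then on every window must be one on which $V$ is locally constant: $V$ freezes at an uncontrolled value instead of entering $\{V\le C\eta_{\max}\}$. Concretely, take $\eta\equiv0$, an initial difference $\Delta\widetilde V(t_0)=0$, and $\operatorname{sign}0=0$. Then $u\equiv0$ and $V(t_n)\equiv V(t_0)>0$, which never triggers the hypothesis. So no argument from the stated assumptions can give $\limsup_n V(t_n)\le C\eta_{\max}$ with $C$ independent of $\eta_{\max}$.

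Your Step~3 threshold $c\,\eta_{\max}$ also has no source in a level-independent margin. If you instead read the hypothesis level-wise as in Lemma~\ref{lemma:descent}, a bound linear in $\eta_{\max}$ needs a margin growing at least linearly in the level. Moreover, nothing then controls the candidate values below the level, so the fact your $2\eta_{\max}$ overshoot bound needs (that the better probe does not increase $V$) is unavailable.

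Closing the proof requires extra assumptions: gains eventually above $\kappa^\star$, no eventual local constancy, a level-dependent margin with a linear lower bound, and no large increase below the level. The paper bridges exactly this point by assertion (``there exists a constant $C>0$ such that whenever $V(t_n)>C\eta_{\max}$ the controller enforces a net one-step decrease''). Your proposal is therefore at the paper's level of rigor, but it does not prove the statement as posed.
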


\begin{proof}
	First note that the measurement error perturbs the finite difference by at most
	\[
	|\Delta\widetilde V(t_n)-\Delta V(t_n)|
	\le |\eta(t_n)|+|\eta(t_{n-1})|
	\le 2\eta_{\max}.
	\]
	Hence the sign of $\Delta\widetilde V(t_n)$ can differ from the sign of
	$\Delta V(t_n)$ only when the true finite-difference variation is small, i.e.,
	when $|\Delta V(t_n)|\le 2\eta_{\max}$.
	
	Consider a sampling instant $t_n$ for which the gain is already above the
	threshold, $\kappa_k(t_n)\ge\kappa^\star$, and $V$ is not locally constant on
	$[t_{n-1},t_n)$. By the assumed one-step descent property (Lemma~\ref{lemma:descent}
	in local margin form), among the two opposite candidate inputs $\pm \kappa_k(t_n)$
	there exists a choice that would decrease the noiseless Lyapunov value by at
	least $\varepsilon$ over one sampling interval.
	
	Due to the bounded measurement corruption, an incorrect sign selection can occur
	only when the observable change is masked by noise; in particular, once the
	control-induced one-step descent margin dominates the worst-case perturbation,
	i.e.\ once $\varepsilon>2\eta_{\max}$, the noisy sign rule selects a descent
	direction consistently and enforces a strict decrease of the noiseless values
	$V(t_n)$.
	
	When $\varepsilon$ is comparable to $2\eta_{\max}$, sign errors may still occur,
	but the adaptive gain update increases $\kappa_k(t_n)$ whenever the observed
	decrease is insufficient. Therefore the closed-loop trajectory cannot remain
	indefinitely in a region where $V(t_n)$ is large while the control-induced
	variation stays below the noise level. As a consequence, there exists a constant
	$C>0$ such that whenever $V(t_n)>C\eta_{\max}$, the controller enforces a net
	one-step decrease that drives $V(t_n)$ back toward the band $V\le C\eta_{\max}$.
	This yields the practical ultimate bound
	$\limsup_{n\to\infty}V(t_n)\le C\eta_{\max}$.
\end{proof}

\subsection{Finite-Difference LaSalle Principle for Model-Free Quantum Systems}

In classical nonlinear control, LaSalle’s invariance principle is a fundamental
tool for establishing asymptotic stability by analyzing the derivative of a
Lyapunov function along system trajectories.  
In the model-free quantum setting considered here, this approach is no longer
available: the generator $\mathcal{F}$ is unknown, the derivative $\dot V(t)$
cannot be evaluated, measurement data are available only at discrete sampling
times, and noise precludes reliable derivative estimation.  
Consequently, the classical LaSalle framework cannot be applied directly.

This subsection develops a model-free analogue, formulated entirely in terms of
finite differences evaluated at sampling instants.  
The central idea is the following: if a measurement-derived Lyapunov observable
exhibits strict finite-difference descent whenever the system lies outside a
designated invariant set, then the closed-loop state must converge to that set,
even in the absence of model knowledge, state reconstruction, or derivative
information.

For a deterministic system $\dot x=f(x)$ with Lyapunov function $V(x)$, the
classical LaSalle invariance principle asserts that if $\dot V(x)\le 0$ for all
$x$, then every trajectory approaches the largest invariant set contained in
$\{x:\dot V(x)=0\}$.  
Here, no such derivative-based characterization is available. Instead, the
analysis must rely on observable variations across successive sampling
intervals. In the present model-free quantum setting:
\begin{enumerate}
	\item the generator $\mathcal{F}$ of $\dot\rho=\mathcal{F}(\rho)$ is unknown;
	\item the quantum state $\rho(t)$ is unobserved and never reconstructed;
	\item measurements provide only a scalar observable $V(t)$ at discrete times
	$t_n=t_0+n\tau$;
	\item noise and sampling preclude reliable estimation of $\dot V(t)$;
	\item only the finite difference
	\[
	\Delta V(t_n)=V(t_n)-V(t_{n-1})
	\]
	is available, and the control law uses only its sign.
\end{enumerate}

These constraints motivate a LaSalle-type convergence result formulated
entirely in terms of finite-difference information. Instead of identifying
invariant sets through vanishing derivatives, the proposed principle
characterizes convergence by ruling out the persistence of strictly positive
Lyapunov plateaus under adaptive, measurement-driven descent.

\begin{theorem}[Finite-Difference LaSalle Principle (Sampling Version)]
	\label{thm:fd-LaSalle}
	Let $t_n:=t_0+n\tau$ be the sampling instants, and let
	\[
	V_n := V(t_n)
	\]
	denote the measured Lyapunov observable evaluated at sampling times.
	Assume that:
	\begin{enumerate}
		\item $0\le V_n\le V_{\max}$ for all $n$;
		\item the adaptive sign-based controller is implemented in sampled-data form
		with zero-order hold on each interval $[t_n,t_{n+1})$;
		\item for every $\rho\notin\mathcal{I}$, where
		\[
		\mathcal{I}:=\{\rho:\;V(\rho)=0\},
		\]
		the closed-loop law enforces finite-difference descent in the sense that,
		whenever $V$ is not locally constant on the sampling window
		$[t_n,t_{n+1})$, one has
		\[
		V_{n+1}-V_n<0;
		\]
		\item stagnation above any strictly positive value is impossible
		(Lemma~\ref{lemma:gain-growth}).
	\end{enumerate}
	Then the limit
	\[
	V_\infty:=\lim_{n\to\infty}V_n
	\]
	exists and satisfies $V_\infty=0$. In particular,
	\[
	\lim_{n\to\infty}\mathrm{dist}\bigl(\rho(t_n),\mathcal{I}\bigr)=0.
	\]
	If $V$ is proper (i.e.\ $V(\rho)=0$ iff $\rho\in\mathcal{I}$), then
	\[
	\lim_{n\to\infty}\rho(t_n)\in\mathcal{I},
	\qquad\text{equivalently,}\qquad
	\mathrm{dist}\bigl(\rho(t_n),\mathcal{I}\bigr)\to 0.
	\]
\end{theorem}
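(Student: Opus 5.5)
The plan follows the four-step structure used for Theorem~\ref{thm:V-to-zero} and Theorem~\ref{thm:stabilization-driftfree}: first show that $V_n$ is eventually monotone, then deduce existence of the limit, then rule out a positive limit, and finally translate $V_n\to0$ into convergence to $\mathcal{I}$.

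\emph{Step 1 (eventual monotonicity).} For each $n$, either $V$ is locally constant on $[t_n,t_{n+1})$, in which case $V_{n+1}=V_n$, or it is not. In the latter case, either $\rho(t_n)\in\mathcal{I}$, so $V_n=0$ and hence $V_{n+1}\ge V_n$ with the trajectory already at the target level, or hypothesis~3 gives $V_{n+1}<V_n$. For the case $\rho(t_n)\in\mathcal{I}$, I would add the remark that under drift-free zero-order-hold dynamics the zero level is reached only at the target. To avoid depending on this invariance, I would instead work with $\limsup$ and $\liminf$. Outside $\mathcal{I}$ one always has $V_{n+1}\le V_n$, so every excursion above $0$ is nonincreasing until $V$ hits $0$.

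\emph{Step 2 (existence of the limit).} If $V_n>0$ for all $n\ge N$, Step~1 makes $\{V_n\}_{n\ge N}$ nonincreasing and bounded below by $0$ (hypothesis~1). Lemma~\ref{lemma:limit-exists} then gives a finite limit $V_\infty\ge0$. If instead $V_n=0$ infinitely often, I would argue that $\liminf V_n=0$. If moreover $\limsup V_n=\delta>0$, the sequence would have to climb from $0$ back above $\delta/2$ infinitely often. Such a climb can occur only through steps at which $\rho(t_n)\in\mathcal{I}$, and this is excluded once $\mathcal{I}$ is forward invariant for the closed loop. Here I expect to use that at $V_n=0$ one has $\Delta V(t_n)\le 0$, which keeps the gain step from pushing the state away in a way the descent rule does not control. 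I regard this case as the main obstacle. The fallback is to read hypothesis~3 together with hypothesis~4 as applying to every nonzero level: any increase from $0$ to a positive level is followed by monotone nonincrease, and the stagnation exclusion is applied along each such tail.

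\emph{Step 3 (the limit is zero).} Suppose $V_\infty>0$. Then $V_n\ge V_\infty/2>0$ for all large $n$, so the sampled sequence converges to a strictly positive plateau. This is exactly what hypothesis~4 (Lemma~\ref{lemma:gain-growth}, part~2) forbids. Hence $V_\infty=0$, by the same contradiction used in the proof of Theorem~\ref{thm:V-to-zero}.

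\emph{Step 4 (distance to $\mathcal{I}$).} Since $V$ is continuous on the compact set $\mathcal{D}(\mathcal{H})$ and $\mathcal{I}=V^{-1}(0)$ is closed, I would argue by contradiction. If $\mathrm{dist}(\rho(t_{n_j}),\mathcal{I})\ge\delta>0$ along some subsequence, compactness gives a further subsequence $\rho(t_{n_j})\to\bar\rho$ with $\mathrm{dist}(\bar\rho,\mathcal{I})\ge\delta$. Continuity then yields $V(\bar\rho)=\lim V_{n_j}=0$, so $\bar\rho\in\mathcal{I}$, a contradiction. Therefore $\mathrm{dist}(\rho(t_n),\mathcal{I})\to0$. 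When $V$ is proper and $\mathcal{I}$ is a single point, such as $P_\psi$, this is convergence $\rho(t_n)\to P_\psi$. In general it is convergence to the set $\mathcal{I}$, which is the sense in which the final statement $\lim\rho(t_n)\in\mathcal{I}$ should be read.
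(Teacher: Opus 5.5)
Your outline follows the paper's proof. The paper obtains the limit by citing Lemma~\ref{lemma:limit-exists}, rules out $V_\infty>0$ by appealing to hypothesis~4 (Lemma~\ref{lemma:gain-growth}), and passes from $V_n\to0$ to $\mathrm{dist}(\rho(t_n),\mathcal{I})\to0$ using closedness of $\mathcal{I}$. Your Step~3 is the same contradiction, stated more directly. Your compactness argument in Step~4 supplies the continuity and compactness reasoning that the paper only asserts.

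Your Step~2 does not close as written. You are right that hypotheses~1--4, read literally, do not yield the eventual nonincrease that Lemma~\ref{lemma:limit-exists} requires. Hypothesis~3 says nothing when $\rho(t_n)\in\mathcal{I}$. Hypothesis~4 only forbids convergence to a positive value. So a sampled sequence that keeps returning to $0$ and jumping back up is not excluded.

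The mechanism you propose to rule this out fails. If $V_n=0$ is reached from $V_{n-1}>0$, then $\Delta V(t_n)<0$, and the sign rule applies $u_k=+\kappa_k(t_n)\neq0$. This control in general rotates the state off $P_\psi$; in the qubit example, $[\sigma_x,P_0]\neq0$. Hence $\mathcal{I}$ is not forward invariant under the closed loop, and $\Delta V(t_n)\le 0$ is exactly what produces a nonzero push rather than preventing one.

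Your fallback does not help either. Applying hypothesis~4 along each tail only covers the case where some tail is infinite. With infinitely many returns to $0$, every tail is finite. You then get $\liminf V_n=0$ but no control of $\limsup V_n$.

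The paper does not resolve this case either. It simply invokes Lemma~\ref{lemma:limit-exists}, tacitly treating eventual nonincrease of $V_n$ as a standing assumption. To make your Step~2 rigorous, add this explicitly as a hypothesis. One option is eventual nonincrease itself (hypothesis~3 of Lemma~\ref{lemma:limit-exists}). Another is forward invariance of $\mathcal{I}$, for instance by switching the control off once $V_n=0$. With either addition, your Steps~1--4 go through as written.
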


\begin{proof}
	By Lemma~\ref{lemma:limit-exists}, the bounded sequence $\{V(t_n)\}$ admits a
	finite limit $V_\infty\ge 0$.
	
	Suppose, by contradiction, that $V_\infty>0$. Then there exists $\delta>0$ and
	$N$ such that
	\[
	V(t_n)\ge \delta>0
	\qquad\text{for all } n\ge N,
	\]
	so $\rho(t_n)\notin \mathcal{I}$ for all $n\ge N$.
	
	Moreover, since stagnation above any strictly positive value is impossible
	(Lemma~\ref{lemma:gain-growth}), the controller cannot remain indefinitely on a
	positive plateau. In particular, for infinitely many indices $n\ge N$, the
	closed-loop evolution must produce a strict descent event over one sampling
	interval. By the descent property (assumption~3) together with the gain-growth
	mechanism, there exist $\varepsilon>0$ and infinitely many indices $n\ge N$ such
	that
	\[
	V(t_{n+1}) \le V(t_n)-\varepsilon.
	\]
	Such uniform decreases cannot occur infinitely often for a nonnegative bounded
	sequence converging to a strictly positive limit. This contradiction implies
	$V_\infty=0$.
	
	Finally, since $\mathcal{I}=\{\rho:V(\rho)=0\}$ is closed and $V(t_n)\to 0$,
	we obtain $\mathrm{dist}(\rho(t_n),\mathcal{I})\to 0$. If $V$ is proper, this
	means $\rho(t_n)\to\mathcal{I}$.
\end{proof}

This result provides a discrete-time, measurement-driven analogue of LaSalle’s
invariance principle.  
Whenever the closed-loop system remains outside the zero set of $V$ at the
sampling instants $t_n$, adaptive gain amplification guarantees that the finite
difference
\[
\Delta V(t_n)=V(t_n)-V(t_{n-1})
\]
becomes strictly negative after a finite transient.  
As a consequence, the sampled sequence $\{V(t_n)\}$ converges to zero, and the
quantum state approaches the desired invariant set at the sampling times, despite
the complete absence of model knowledge, state access, or derivative information.

\subsection{Discussion}

The stability results derived above apply to the \emph{drift-free} setting, in
which the intrinsic evolution vanishes and the state evolves solely through the
applied control Hamiltonians.  
Under this assumption, the closed-loop dynamics contain no unknown autonomous
term, and the controller interacts with the system only through the known
directions generated by $\{H_k\}$ and through sampled measurement data.

Within this framework, the analysis establishes a fully self-contained
model-free stabilization mechanism for finite-dimensional quantum systems at
the sampling instants.  
Its essential ingredients are:
\begin{itemize}
	\item[] sign-based output feedback, which enforces empirical Lyapunov descent
	using only finite-difference information;
	\item[] adaptive gain amplification, which guarantees that stagnation at any
	strictly positive Lyapunov level cannot persist;
	\item[] convergence of the sampled Lyapunov sequence $\{V(t_n)\}$, established
	without access to derivatives or any part of the generator;
	\item[] uniqueness of the limiting value $V_\infty$, which must equal zero by
	the finite-difference LaSalle principle;
	\item[] asymptotic convergence of the quantum state to the target pure state
	at the sampling times whenever the Lyapunov observable is proper.
\end{itemize}

These conclusions require no knowledge of the dynamical generator and rely only
on sampled observable data.  
The resulting stabilization law can be interpreted as a discrete-time analogue
of LaSalle’s invariance principle, augmented with adaptive feedback to guarantee
strict finite-difference descent when necessary.  
The robustness result further shows that the mechanism preserves its qualitative
behavior under bounded measurement noise, indicating compatibility with
realistic experimental uncertainty and suggesting natural connections to
stochastic and practical Lyapunov stability concepts.

\section{ISS-Type Stability Under Unknown Drift and Dissipation}
\label{sec:iss}

The stability results of the previous section were derived under the
\emph{drift-free} assumption, where the state evolves solely under the applied
control Hamiltonians.  
In this regime, the adaptive sign-based controller enforces strict
finite-difference descent of the Lyapunov observable and guarantees asymptotic
convergence to the target state.

We now consider the physically realistic case in which the dynamics include an
\emph{unknown, persistent drift} and possibly irreversible dissipative noise.
For notational clarity--and only to ensure physical consistency--we represent the
unknown generator in the Lindblad form, without assuming Markovianity or any
specific structure accessible to the controller,
\[
\mathcal{F}_{\mathrm{drift}}+\mathcal{F}_{\mathrm{noise}} \neq 0,
\]
where
\[
\mathcal{F}_{\mathrm{drift}}(\rho)=-i[H_{\mathrm{drift}},\rho],\qquad
\mathcal{F}_{\mathrm{noise}}(\rho)
=\sum_{k}\!\left(
L_k \rho L_k^\dagger
-\tfrac12\{L_k^\dagger L_k,\rho\}
\right),
\]
with $H_{\mathrm{drift}}$ and the operators $\{L_k\}$ completely unknown.  
This representation is purely formal: the model-free controller does not use or
identify any part of the generator.  
Its sole purpose is to guarantee that the unobserved dynamics correspond to a
valid CPTP evolution.

The controlled dynamics therefore satisfy
\[
\dot{\rho}(t)
=\mathcal{F}_{\mathrm{drift}}(\rho(t))
+\mathcal{F}_{\mathrm{noise}}(\rho(t))
+\sum_k u_k(t)\,[-iH_k,\rho(t)].
\]
Unknown drift and dissipation act as persistent disturbances that no
model-free controller can exactly cancel.  
Consequently, exact asymptotic stabilization is generically impossible; the
appropriate performance benchmark is \emph{input-to-state stability} (ISS), in
which the deviation from the target is bounded by a function of the disturbance
magnitude.

This mirrors the classical nonlinear setting, where persistent disturbances
prevent asymptotic regulation and the best achievable guarantee is ISS or one of
its practical variants.  
Here the drift and noise terms play the role of unknown exogenous disturbances
entering through an uncontrollable channel.  
Because the controller does not know (and cannot identify) these terms, one can seek only an ISS-type estimate of the form:
\[
V(t)\le \beta(V(0),t)+\gamma(D),
\]
where $D$ bounds the disturbance strength, $\beta\in\mathcal{KL}$, and
$\gamma\in\mathcal{K}$.  
This establishes \emph{practical model-free stabilization}: the state converges
to a neighborhood whose radius depends continuously on the disturbance level and
shrinks to zero as the disturbance vanishes.

Finally, we show that this ISS limitation is fundamental: when unknown
Hamiltonian drift is present, the target pure state is generically not an
equilibrium of the closed-loop dynamics and cannot be globally stabilized by
any model-free feedback law based solely on measurement-derived information.

\begin{lemma}[ISS-type practical stability under unknown drift]
	\label{lemma:quantum-iss}
	Consider the sampled-data closed-loop dynamics
	\[
	\dot{\rho}(t)
	=\mathcal{F}_{\mathrm{drift}}(\rho(t))
	+\mathcal{F}_{\mathrm{noise}}(\rho(t))
	+\sum_{k} u_k(t)\,[-iH_k,\rho(t)],
	\]
	with sampling period $\tau>0$ and sampling instants $t_n=t_0+n\tau$.
	Assume a zero-order hold implementation on each interval $[t_n,t_{n+1})$.
	
	Let $V(\rho)$ be a measurement-derived Lyapunov observable and define the
	one-step sampled increment $\Delta V(t_n):=V(t_n)-V(t_{n-1})$.
	Assume the control channels admit two opposite constant candidate inputs
	$\pm\kappa_k(t_n)$ (applied over $[t_n,t_{n+1})$) with adaptive gains
	\[
	\kappa_k(t_{n+1})
	=\kappa_k(t_n)+\alpha_k\,|\Delta V(t_n)|,\qquad \alpha_k>0.
	\]
	
	Assume:
	\begin{enumerate}
		\item the target is reachable under the available Hamiltonians $\{H_k\}$;
		\item $V(\rho)$ is proper and bounded on $\mathcal{D}(\mathcal{H})$,
		$0\le V(\rho)\le V_{\max}$;
		\item the disturbance satisfies the uniform bound (e.g.\ in trace norm)
		\[
		\bigl\|\mathcal{F}_{\mathrm{drift}}(\rho)+\mathcal{F}_{\mathrm{noise}}(\rho)\bigr\|
		\le D
		\quad\text{for all }\rho\in\mathcal{D}(\mathcal{H});
		\]
		\item the disturbance-induced contribution to the Lyapunov increment satisfies
		\[
		|\Delta V_{\mathrm{dist}}(t_n)| \le c\,D\,\tau,
		\]
		for some constant $c>0$ (a Lipschitz constant of $V$ on the compact state space);
		\item (\emph{Uniform one-step controllable descent on level sets})
		for every $\varepsilon>0$ there exist constants
		$\kappa^\star(\varepsilon)>0$ and $\eta(\varepsilon)>0$ such that for any
		$\rho$ with $V(\rho)\ge \varepsilon$ and any sampling instant $t_n$ with
		$\rho(t_n)=\rho$, whenever $\kappa_k(t_n)\ge \kappa^\star(\varepsilon)$,
		\begin{equation}
			\min_{\sigma\in\{+,-\}}
			\Bigl(
			V\bigl(\Phi_{u_k=\sigma\kappa_k(t_n)}(\tau,\rho(t_n))\bigr) - V(\rho(t_n))
			\Bigr)
			\le -\eta(\varepsilon).
			\label{eq:iss-min-descent}
		\end{equation}
	\end{enumerate}
	
	Then there exists a constant $C>0$ such that
	\[
	\limsup_{n\to\infty} V(t_n)\le C\,D\,\tau.
	\]
\end{lemma}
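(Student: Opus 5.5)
We mirror the drift-free argument of Lemma~\ref{lemma:descent}, now carrying along an additive disturbance term of size $cD\tau$. The first step is to split each one-step increment into a control-induced part and a disturbance-induced part:
\[
V(t_{n+1})-V(t_n)=\Delta V_{\mathrm{ctrl}}(t_n)+\Delta V_{\mathrm{dist}}(t_n),
\]
with $|\Delta V_{\mathrm{dist}}(t_n)|\le cD\tau$ by assumption~4. Here $\Delta V_{\mathrm{ctrl}}$ is the increment of the flow under the selected input. Assumption~5 is stated for the full flow $\Phi$, so the splitting is used only to compare the two candidate inputs. More precisely, the candidate achieving the minimum in \eqref{eq:iss-min-descent} gives an increment of at most $-\eta(\varepsilon)$. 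The double-probe rule then selects a candidate whose increment is no worse, up to the disturbance mismatch. That mismatch is at most $2cD\tau$, since the two probes are evaluated under the same unknown disturbance. The outcome of this step is
\[
V(t_{n+1})\le V(t_n)-\eta(\varepsilon)+2cD\tau
\]
whenever $V(t_n)\ge\varepsilon$ and $\kappa_k(t_n)\ge\kappa^\star(\varepsilon)$.

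The second step links the level $\varepsilon$ to the disturbance. Fix the margin function $\eta$ and choose $\varepsilon$ so that $\eta(\varepsilon)\ge 3cD\tau$; call this level $\varepsilon_D$. For an explicit constant we assume $\eta$ can be taken comparable to $\varepsilon$, i.e.\ $\eta(\varepsilon)\ge \varepsilon/C_0$ on the relevant range. This is the point where the statement's constant $C$ comes from: $\varepsilon_D=3C_0cD\tau$, and $C=3C_0c$ plus a slack term coming from the overshoot. With this choice, every visit to $\{V\ge\varepsilon_D\}$ with gains above threshold yields a decrease of at least $cD\tau>0$.

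The third step is the gain argument. As in Lemma~\ref{lemma:gain-growth}, suppose infinitely many sampling instants have $V(t_n)\ge\varepsilon_D$ together with insufficient decrease. Then each such event contributes $|\Delta V(t_n)|$ to the gain, which is bounded away from zero: either the increment is $\ge0$ and not locally constant, or it is disturbance-driven. So $\kappa_k\to\infty$, and the threshold $\kappa^\star(\varepsilon_D)$ is eventually exceeded. From then on, the telescoping argument of Lemma~\ref{lemma:descent} applies: every excursion above $\varepsilon_D$ is pushed down by a uniform amount. Since $V\ge0$, the trajectory enters $\{V<\varepsilon_D\}$ in finitely many steps. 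Once inside, a single step can raise $V$ by at most $cD\tau$ (the control increment is nonpositive by the selection), so $V(t_n)\le\varepsilon_D+cD\tau$ thereafter. This gives
\[
\limsup_{n\to\infty} V(t_n)\le (3C_0+1)cD\tau=:C\,D\,\tau .
\]

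The main obstacle is the second step. The statement gives $\eta(\varepsilon)$ only qualitatively, so a bound linear in $D\tau$ needs a linear lower bound on the descent margin near the target. If that is unavailable, the honest conclusion is $\limsup V(t_n)\le \eta^{-1}(3cD\tau)+cD\tau$, a class-$\mathcal K$ gain rather than a linear one. I would first try to justify the linear margin from properness of $V$ and compactness of $\mathcal D(\mathcal H)$. Failing that, I would state $C$ as depending on the local behaviour of $\eta$ on the relevant compact range of levels. A secondary subtlety is that the gain update does not distinguish disturbance-induced non-descent from control failure. This is harmless here, because it only makes the gains grow, which is exactly what the argument needs.
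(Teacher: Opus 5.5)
Your route is the paper's own: split $\Delta V$, use Assumption~5 for a one-step bound $-\eta(\varepsilon)+O(cD\tau)$, let the gains pass $\kappa^\star$, and tie $\varepsilon$ to $D\tau$. Your $2cD\tau$ is just a conservative version of the paper's $cD\tau$. Your doubt about the second step is justified. The paper's Step~5 asserts that a $C$ with $\eta(CD\tau)>cD\tau$ exists merely because $\eta>0$. But $\eta(\varepsilon)=\varepsilon^2$ already forces $C>\sqrt{c/(D\tau)}$, which is unbounded as $D\tau\to0$. A $D$-dependent $C$, on the other hand, makes the claim trivial ($C=V_{\max}/(D\tau)$). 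Properness and compactness carry no rate information, so the linear margin has to be assumed, or the conclusion weakened to a class-$\mathcal K$ gain as you say.

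The real gaps are in the two steps you add to close the argument. The first is gain divergence. The update gives $\kappa_k(t_n)=\kappa_k(t_0)+\alpha_k\sum_{i<n}|\Delta V(t_i)|$, so the gains diverge only if $\{V(t_n)\}$ has infinite total variation. Your claim that $|\Delta V(t_n)|$ is bounded away from zero on insufficient-decrease events is unsupported. Such an event may be a tiny positive or negative increment, and ``disturbance-driven'' gives no lower bound. Below $\kappa^\star(\varepsilon_D)$ Assumption~5 is silent. Nothing in the hypotheses excludes $V(t_n)$ settling above $\varepsilon_D$ with summable variation while $\kappa_k$ converges below the threshold. The paper's Step~4 and Lemma~\ref{lemma:gain-growth}(1) make the same leap. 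Since the gains are nondecreasing, an extra hypothesis such as $\kappa_k(t_0)\ge\kappa^\star(\varepsilon_D)$, or a fixed minimal gain increment at non-descent events, would repair this.

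The second gap is decisive for a $\limsup$ bound: ``the control increment is nonpositive by the selection'' is false. Choosing the better of $\pm\kappa_k(t_n)$ only makes the chosen increment no larger than the other one. For $V(t_n)<\varepsilon_D$ nothing guarantees descent. At or near the target $V$ is minimal, so both probes increase it, possibly by an amount of order one for the large angles $\kappa_k\tau$ the gain law produces. For the qubit, a rotation by roughly $\pi$ about the $x$-axis sends a near-target state close to the south pole for either sign.

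Hence your telescoping yields only recurrence: each excursion above $\varepsilon_D$ ends in finitely many steps, so $\liminf_{n\to\infty}V(t_n)\le\varepsilon_D$. It does not give $V(t_n)\le\varepsilon_D+cD\tau$ eventually. The paper's Step~5 has the same defect: it passes from ``cannot remain above $\varepsilon$ indefinitely'' to $\limsup\le\varepsilon$. You were right to look for an overshoot bound, but it needs an ingredient absent from the statement. One option is a zero-input choice used whenever neither probe decreases $V$; then a step from below $\varepsilon_D$ raises $V$ by at most $cD\tau$, and your bound $(3C_0+1)cD\tau$ follows. Another is an explicit hypothesis bounding the one-step growth of $V$ from low levels.
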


\begin{proof}
	We analyze the sampled evolution along $t_n=t_0+n\tau$.
	
	\emph{Step 1: Decomposition of the sampled increment.}
	For each sampling step,
	\[
	\Delta V(t_n)=V(t_n)-V(t_{n-1})
	=\Delta V_{\mathrm{ctrl}}(t_n)+\Delta V_{\mathrm{dist}}(t_n),
	\]
	where $\Delta V_{\mathrm{dist}}(t_n)$ collects the net effect of
	$\mathcal{F}_{\mathrm{drift}}+\mathcal{F}_{\mathrm{noise}}$ over $[t_{n-1},t_n)$,
	and $\Delta V_{\mathrm{ctrl}}(t_n)$ is the net contribution attributable to the
	(control-driven) part of the evolution under the implemented input on that
	interval.
	
	\emph{Step 2: Uniform \emph{existence} of control-induced descent on level sets.}
	Fix $\varepsilon>0$ and consider the compact level set
	\[
	\Omega_\varepsilon:=\{\rho\in\mathcal{D}(\mathcal{H}) : V(\rho)\ge \varepsilon\}.
	\]
	By Assumption~5, there exist $\kappa^\star(\varepsilon)>0$ and $\eta(\varepsilon)>0$
	such that whenever $\rho(t_n)\in\Omega_\varepsilon$ and $\kappa_k(t_n)\ge \kappa^\star(\varepsilon)$,
	\eqref{eq:iss-min-descent} holds: among the two opposite constant inputs
	$\pm \kappa_k(t_n)$ applied over $[t_n,t_{n+1})$, at least one yields a one-step
	decrease in the noiseless Lyapunov value by at least $\eta(\varepsilon)$.
	
	\emph{Step 3: Competition between the favorable control action and the disturbance.}
	By Assumption~4,
	\[
	|\Delta V_{\mathrm{dist}}(t_n)|\le c\,D\,\tau.
	\]
	Therefore, whenever $\rho(t_n)\in\Omega_\varepsilon$ and the gain is above the
	threshold, if the favorable sign in \eqref{eq:iss-min-descent} is applied on
	$[t_n,t_{n+1})$, then the total one-step change satisfies
	\[
	\Delta V(t_{n+1})
	\le -\eta(\varepsilon)+c\,D\,\tau.
	\]
	In particular, if $\eta(\varepsilon)>c\,D\,\tau$, then a strict net decrease is
	available outside $\{V<\varepsilon\}$.
	
	\emph{Step 4: Adaptive gain amplification and recurrence of descent steps.}
	If $\Delta V(t_n)\ge 0$ at some sampling instants, the adaptive update
	\[
	\kappa_k(t_{n+1})=\kappa_k(t_n)+\alpha_k|\Delta V(t_n)|
	\]
	increases the gain. Hence, for any fixed $\varepsilon>0$, the gain eventually
	exceeds $\kappa^\star(\varepsilon)$ unless the trajectory enters $\{V<\varepsilon\}$
	and remains there.
	Once $\kappa_k(t_n)\ge\kappa^\star(\varepsilon)$ holds, the existence of a strict
	net decrease outside $\{V<\varepsilon\}$ from Step~3 implies that the closed loop
	cannot indefinitely persist in $\Omega_\varepsilon$ while maintaining
	$\Delta V(t_n)\ge 0$ frequently: either it enters $\{V<\varepsilon\}$, or it
	experiences descent steps that drive it toward this set.
	
	\emph{Step 5: Ultimate boundedness.}
	Choose $\varepsilon:=C\,D\,\tau$ with $C>0$ large enough such that
	\[
	\eta(\varepsilon)>c\,D\,\tau.
	\]
	(Existence of such $C$ follows from Assumption~5, which provides a positive
	descent margin on every strictly positive level set.)
	Then, whenever $V(t_n)>\varepsilon$ and the gain is above the corresponding
	threshold, there exists a control polarity that yields a net decrease in $V$ over
	one sampling step. By Step~4 and the gain adaptation mechanism, the closed-loop
	trajectory cannot remain above $\varepsilon$ indefinitely. Therefore,
	\[
	\limsup_{n\to\infty} V(t_n)\le \varepsilon = C\,D\,\tau,
	\]
	which establishes ISS-type practical stabilization at the sampling instants.
\end{proof}

\begin{remark}[Applicability to Double-Probe Gradient Estimation]
	The same ISS-type bound extends directly to the double-probe pseudo-gradient
	controller.
	In this case, the descent direction is estimated from symmetric finite-difference
	evaluations of the Lyapunov observable under opposite constant probing actions
	applied over one sampling interval.
	Whenever the disturbance-induced variation over a sampling interval remains
	sufficiently small relative to the probing amplitude, the resulting descent
	direction is selected consistently.
	Consequently, the closed-loop evolution satisfies
	\[
	\limsup_{n\to\infty} V(t_n)\le \gamma(D),
	\]
	in agreement with the behavior observed in numerical simulations.
\end{remark}

\paragraph{ISS Interpretation and the Case $V_\infty>0$.}

The ISS analysis shows that, in the presence of unknown Hamiltonian drift or
irreversible Lindblad noise, the model-free controller cannot perfectly cancel
the disturbance.
As a result, the closed-loop Lyapunov observable does not converge to zero but
approaches a disturbance-limited neighborhood of the origin.
Lemma~\ref{lemma:quantum-iss} yields
\[
\limsup_{n\to\infty} V(t_n)\le \gamma(D),
\]
where $D$ bounds the magnitude of
$\mathcal{F}_{\mathrm{drift}}+\mathcal{F}_{\mathrm{noise}}$.
Thus a strictly positive limiting value is not a numerical artifact but a direct
consequence of ISS-type behavior:
\[
V_\infty>0
\quad\text{for generic unknown drift or noise terms.}
\]
Exact asymptotic stabilization is therefore achievable only when the disturbance
vanishes or can be compensated through additional model-based control.

\subsection{Fundamental Limitation: No Asymptotic Stabilization Without Drift Cancellation}

Unknown Hamiltonian drift acts as a persistent disturbance that induces a
continuous unitary rotation of the state.  
Since a model-free controller observes only past values of a scalar Lyapunov
observable, it cannot estimate or cancel this rotation.  
This leads to a fundamental obstruction to asymptotic stabilization.

\begin{theorem}
	\label{thm:no-asymptotic}
	Consider the closed-loop evolution
	\[
	\dot{\rho}(t)
	= -i[H_{\mathrm{drift}}+H_u(t),\,\rho(t)],
	\]
	where $H_{\mathrm{drift}}\neq 0$ is unknown and $H_u(t)$ is generated solely
	from past values of a Lyapunov observable $V(t)$. Then:
	\begin{enumerate}
		\item The drift-induced flow
		\[
		\rho(t)=e^{-iH_{\mathrm{drift}}(t-t_0)}\,\rho(t_0)\,e^{iH_{\mathrm{drift}}(t-t_0)}
		\]
		admits fixed points only for states satisfying
		\(
		[\rho,H_{\mathrm{drift}}]=0.
		\)
		\item Because only $V(t)$ is observed, the controller cannot in general
		reconstruct $H_{\mathrm{drift}}$ or synthesize a cancelling control
		$H_u(t)\approx -H_{\mathrm{drift}}$.
		\item If the target $\rho^\star$ does not commute with $H_{\mathrm{drift}}$,
		it is not an equilibrium of the closed-loop system.
	\end{enumerate}
	Consequently, no model-free controller based solely on output measurements can
	in general guarantee
	\(
	\rho(t)\to \rho^\star
	\)
	in the presence of an unknown drift.
	The strongest achievable performance is ISS-type practical stabilization, as
	formalized in Lemma~\ref{lemma:quantum-iss}.
\end{theorem}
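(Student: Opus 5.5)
The plan is to prove the three items in order and then argue the ``consequently'' clause by exhibiting a specific drift and initial state for which convergence fails. The argument is at the level of the paper's own formulation, so ``in general'' will be read as ``there exist instances of unknown drift for which''.

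For item~1, a state $\rho$ is a fixed point of the flow $t\mapsto e^{-iH_{\mathrm{drift}}(t-t_0)}\rho\,e^{iH_{\mathrm{drift}}(t-t_0)}$ if and only if the derivative at $t=t_0$ vanishes. That derivative is $-i[H_{\mathrm{drift}},\rho]$, so fixed points are exactly the states with $[\rho,H_{\mathrm{drift}}]=0$. Conversely, if the commutator vanishes, then $\rho$ commutes with every function of $H_{\mathrm{drift}}$, in particular with the exponential, and so it is fixed.

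Item~3 follows from the same computation. Suppose the closed loop is at the target, $\rho(t)=\rho^\star$, with $V=0$. Under the feedback laws of Section~\ref{sec:Proposed_Framework} (which are the ones considered) the input produced by vanishing Lyapunov information is then $H_u=0$, since the sign of a zero finite difference is zero and the gain does not change. The vector field at $\rho^\star$ is therefore $-i[H_{\mathrm{drift}},\rho^\star]\neq 0$, so $\rho^\star$ is not an equilibrium and the trajectory leaves it.

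For a general causal output-feedback map one has to be more careful. An input that exactly cancels the drift at $\rho^\star$ would require $[H_{\mathrm{drift}}+H_u,\rho^\star]=0$. This is where item~2 enters.

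Item~2 is the main obstacle, because it is an indistinguishability statement. The construction is to take two drifts $H_{\mathrm{drift}}$ and $H'_{\mathrm{drift}}=UH_{\mathrm{drift}}U^\dagger$, with $U$ a unitary commuting with $P_\psi$ and with the controls $H_k$. Alternatively, the simplest case is $H'_{\mathrm{drift}}=H_{\mathrm{drift}}+[\,\cdot\,]$, chosen so that the measured record $V(t)=1-\Tr(P_\psi\rho(t))$ is identical for both systems under every input history. For a qubit with $V$ given by the $z$-population and drift $\pm\omega\sigma_z$, rotating the initial state about $z$ leaves $V$ invariant, so the data cannot distinguish the drift direction. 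The feedback is then a functional of identical data, so it produces the same $H_u(t)$ for both systems. A cancelling control $H_u\approx-H_{\mathrm{drift}}$ cannot simultaneously approximate $-H'_{\mathrm{drift}}$ when the two drifts differ, and no single output-feedback law is stabilizing for both models.

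Combining items~1--3 gives the ``consequently'' clause. Convergence $\rho(t)\to\rho^\star$ would require $\rho^\star$ to be an equilibrium of the closed loop in the limit, which means asymptotic cancellation of the noncommuting part of $H_{\mathrm{drift}}$. By item~2 this cancellation is impossible uniformly over the drifts consistent with the data, so such guarantees fail in general. The remaining achievable property is the practical bound of Lemma~\ref{lemma:quantum-iss}, which is invoked directly.
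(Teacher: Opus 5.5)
Your items~1 and~3 are fine and coincide with the paper's (very brief) argument. Fixed points of the free flow are exactly the states commuting with $H_{\mathrm{drift}}$. For the sign law with $\operatorname{sign}(0)=0$, a trajectory resting at $\rho^\star$ has $\Delta V\equiv 0$, hence $H_u=0$, and is pushed off by $-i[H_{\mathrm{drift}},\rho^\star]\neq 0$. This does not cover the double-probe rule of Lemma~\ref{lemma:descent}, which always applies $\pm\kappa_k(t_n)\neq 0$.

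The gap is in item~2, and therefore in the ``consequently'' clause. There you go beyond the paper's bare assertion and try an indistinguishability argument, and that argument cannot work here, for two reasons.

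\emph{The symmetry preserves the cancelling inputs.} If $U$ commutes with $P_\psi$ and with every $H_k$, then $[UH_{\mathrm{drift}}U^\dagger+H_u,P_\psi]=U[H_{\mathrm{drift}}+H_u,P_\psi]U^\dagger$. So the two models have exactly the same cancelling inputs. Equilibrium at the target only needs cancellation of the part of the drift that does not commute with $P_\psi$, not $H_u\approx -H_{\mathrm{drift}}$. Your remark that one $H_u$ cannot approximate both $-H_{\mathrm{drift}}$ and $-H'_{\mathrm{drift}}$ is therefore beside the point. Equivalently, the second closed-loop trajectory is $U\rho(t)U^\dagger$, which tends to $UP_\psi U^\dagger=P_\psi$ whenever $\rho(t)\to P_\psi$.

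\emph{Success is visible in the data.} This is the decisive point. The measured signal is the proper observable $V$ itself, and on the compact state space $\rho(t)\to P_\psi$ if and only if $V(t)\to 0$. Two models that generate identical output records under the same feedback are therefore either both stabilized or both not. Hence non-reconstructibility of $H_{\mathrm{drift}}$ does not imply non-cancellability. No argument based only on identical records can yield ``no single output-feedback law stabilizes both models''.

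The concrete instances do not rescue the step either. A drift $\pm\omega\sigma_z$ commutes with $P_0$, so the target is then an equilibrium and the hypothesis of item~3 fails. With the paper's controls $\tfrac12\sigma_x,\tfrac12\sigma_y$, the only unitaries commuting with $P_0$ and both controls are phases, so $H'_{\mathrm{drift}}=H_{\mathrm{drift}}$. The alternative $H_{\mathrm{drift}}+[\,\cdot\,]$ is never specified.

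A genuine obstruction has to come from restricting the controller class, not from the information structure alone. For unrestricted causal feedback, item~3 fails as stated: in the paper's example the constant input $u_x=-0.7$, $u_y=-0.4$ gives $H_{\mathrm{drift}}+H_u=0.45\,\sigma_z$, which commutes with $P_0$. What you can actually prove is one of the following:
\begin{itemize}
\item the claim for the paper's own sign laws, via your zero-input observation;
\item for arbitrary feedback, an actuation-type obstruction: if input bounds keep $\langle 0|H_{\mathrm{drift}}+H_u|1\rangle$ in a convex set away from $0$, then near the target $\dot\rho_{01}\approx i\langle 0|H_{\mathrm{drift}}+H_u|1\rangle$ has a component bounded below in a fixed direction, and $\rho_{01}\to 0$ is impossible.
\end{itemize}
Beyond that, the ``in general'' clause remains the heuristic that it is in the paper's own proof.
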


\begin{proof}
	If $H_{\mathrm{drift}}\neq 0$, the free unitary trajectory is nontrivial unless
	the target commutes with $H_{\mathrm{drift}}$.  
	The controller receives only past scalar values of $V(t)$ and no information
	about the generator or $\dot{\rho}(t)$; hence reconstruction of
	$H_{\mathrm{drift}}$ and synthesis of a cancelling control are impossible.
	When $\rho^\star$ is not an equilibrium of the closed-loop dynamics, standard
	invariance arguments rule out asymptotic convergence.  
	ISS-type bounds therefore constitute the maximal achievable guarantee.
\end{proof}

\section{Representative Example: Qubit Stabilization}\label{sec:Example_Qubit_Stabilization}

We illustrate the model-free stabilization framework on the simplest
nontrivial system: a single qubit.  
This example demonstrates how finite-difference feedback stabilizes the target
state using only measurement data and without any knowledge of the drift
Hamiltonian. The essential closed-loop phenomena--adaptive gain amplification, oscillatory
finite-difference behavior induced by the unknown drift, and the resulting
ISS-type convergence to a disturbance-dependent neighborhood--are already fully
visible in this two-dimensional case.
 
The simulation parameters used in this section were selected following the
	practical parameter-selection and tuning procedure described in
	Section~\ref{sec:Proposed_Framework}. In particular, the sampling interval,
	initial gains, gain-adaptation rates, and control bounds were chosen according
	to measurement resolution and admissible control amplitudes, without any
	model-dependent optimization.

The target state is
\[
\ket{\psi}=\ket{0}, 
\qquad 
P_0=\dyad{0}
= 
\begin{pmatrix}
	1 & 0\\
	0 & 0
\end{pmatrix},
\]
and the complementary projector is
\[
P_1=\dyad{1}
=
\begin{pmatrix}
	0 & 0\\
	0 & 1
\end{pmatrix}.
\]
The system is measured by the projective POVM $\{P_0,P_1\}$.  
This yields the proper Lyapunov observable
\[
V(t)=1-\Tr(P_0\rho(t))=\Tr(P_1\rho(t)),
\]
which equals the population of the excited state $\ket{1}$ and therefore
quantifies the portion of the state outside the target.

\paragraph{Measurement-based evaluation of the Lyapunov observable and time scaling.}
Although the quantum state $\rho(t)$ is not accessible to the controller, the Lyapunov observable
	\(
	V(t) = 1 - \mathrm{Tr}(P_{0}\rho(t))
	\)
is directly obtainable from measurement outcomes. For the projective measurement $\{P_{0},\, P_{1}\}$, the quantity $\mathrm{Tr}(P_{0}\rho(t))$ corresponds to the probability of observing the outcome associated with the target state $|0\rangle$. In an experimental implementation, this probability is estimated from measurement statistics (e.g.\ relative frequencies collected over a finite sampling window), yielding an empirical estimate of $V(t)$ without any form of state reconstruction. In the numerical simulations, $\mathrm{Tr}(P_{0}\rho(t))$ is computed directly from $\rho(t)$ for simplicity and to avoid additional sampling noise, while preserving the same information structure available to the controller.
	
The time axis in the simulations is expressed in normalized (dimensionless) units determined by the chosen scaling of the Hamiltonians and control amplitudes. Mapping these units to physical time depends on the specific experimental platform and calibration parameters, such as qubit frequency scales and maximum achievable control strengths.

The available controls are the Pauli rotations
\[
H_x=\tfrac12\sigma_x,\qquad 
H_y=\tfrac12\sigma_y,
\]
which generate $\mathfrak{su}(2)$ and render the target reachable
(see, e.g.,~\cite{altafini_ticozzi_2012,d_alessandro_2007}).

To validate the theoretical ISS predictions, we include an \emph{unknown}
Hamiltonian drift
\[
\mathcal{F}_{\mathrm{drift}}(\rho)=-i[H_{\mathrm{drift}},\rho],
\qquad
H_{\mathrm{drift}}=0.35\,\sigma_x + 0.20\,\sigma_y + 0.45\,\sigma_z,
\]
with no dissipative noise.  
Thus the true (continuous-time) dynamics are
\[
\dot{\rho}(t)
= -i\bigl[H_{\mathrm{drift}} + u_x(t)H_x + u_y(t)H_y,\rho(t)\bigr].
\]
The controller does \emph{not} know $H_{\mathrm{drift}}$ and observes only the
measurement-derived Lyapunov signal.

\medskip
\noindent\textbf{Sampled-data implementation (zero-order hold).}
Let $t_n=n\tau$ denote the sampling instants. At each $t_n$, the controller
obtains $V(t_n)$ from measurement outcomes, forms the finite difference
\[
\Delta V(t_n):=V(t_n)-V(t_{n-1}),
\]
and updates the control inputs according to the sign-based rule
\[
u_x(t_n)=-\kappa_x(t_n)\,\mathrm{sign}\!\bigl(\Delta V(t_n)\bigr),\qquad
u_y(t_n)=-\kappa_y(t_n)\,\mathrm{sign}\!\bigl(\Delta V(t_n)\bigr).
\]
The inputs are then held constant on the interval $[t_n,t_{n+1})$, i.e.,
\[
u_\alpha(t)=u_\alpha(t_n), \qquad t\in[t_n,t_{n+1}),\ \alpha\in\{x,y\}.
\]
The adaptive gains are updated on the same sampling grid:
\[
\kappa_\alpha(t_{n+1})
=\kappa_\alpha(t_n)+\alpha_\alpha\,|\Delta V(t_n)|,
\qquad \alpha\in\{x,y\}.
\]

This controller is model-free: it computes neither $\dot\rho$ nor any drift
estimate, and cannot cancel $H_{\mathrm{drift}}$.  
By Section~\ref{sec:iss}, the resulting behavior is ISS-like: convergence to a
drift-dependent neighborhood.

All assumptions of the finite-difference LaSalle principle are satisfied:
\begin{enumerate}
	\item $V(t_n)\in[0,1]$ is proper and bounded;
	\item $\{H_x,H_y\}$ ensure reachability of the target state;
	\item the sign rule and gain growth enforce finite-difference descent along the
	sampling instants whenever $\rho(t_n)\neq P_0$;
	\item unknown drift prevents exact convergence, but the ISS bound of
	Lemma~\ref{lemma:quantum-iss} guarantees convergence to a small
	disturbance-limited neighborhood.
\end{enumerate}

In simulation, $V(t_n)$ decreases after an oscillatory transient and settles at a
small nonzero value, in full agreement with the ISS theory and
Theorem~\ref{thm:no-asymptotic}.

Figure~\ref{fig:V_drift_doubleprobe} displays the Lyapunov observable.
The trajectory remains oscillatory throughout the evolution--due to the unknown
drift Hamiltonian--but the oscillations exhibit a gradually decreasing
amplitude.
This behavior is not a numerical artifact, but an inherent consequence of the
presence of unmodeled Hamiltonian drift.

In accordance with the ISS-type analysis, the observable settles into
a small, drift-limited residual level instead of converging to zero.
Such persistent but bounded oscillations are therefore expected and reflect
	practical (disturbance-limited) stabilization: while exact asymptotic
	convergence is precluded by the unknown drift, the closed-loop system is driven
	into a neighborhood of the target state whose size depends on the disturbance
	magnitude and admissible control amplitudes.

\begin{figure}[htb] 
	\centering
	\includegraphics[width=0.75\textwidth]{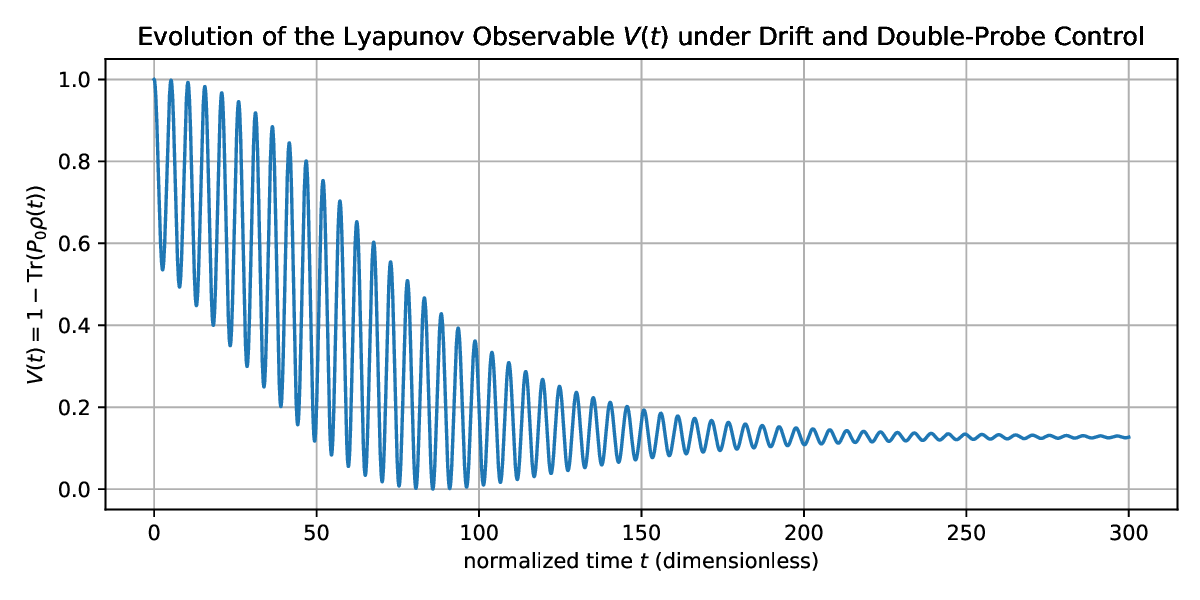}
	\caption{Evolution of the Lyapunov observable $V(t)$ under unknown drift and the
			double-probe model-free controller.
			The continuous-time signal exhibits persistent oscillations induced by the
			unknown drift and the zero-order-hold implementation of the control inputs.
			Lyapunov descent is enforced in the finite-difference sense at the sampling
			instants $t_n=t_0+n\tau$, rather than as a pointwise monotonic decrease of
			$V(t)$ in continuous time.
			As predicted by the ISS-type analysis, the oscillation amplitude gradually
			decreases and the trajectory approaches a disturbance-limited residual level.
			The time axis is expressed in normalized (dimensionless) units determined by
			the chosen scaling of the Hamiltonians and control amplitudes.
	}
			\label{fig:V_drift_doubleprobe}
\end{figure}

Figure~\ref{fig:controls_drift_doubleprobe} shows the control inputs.
During the initial transient ($t<50$), the $u_x$ channel exhibits
larger oscillation amplitudes than $u_y$, reflecting that the double-probe
estimator initially identifies a steeper descent direction along the $H_x$
control axis. As the Lyapunov observable decreases and the state moves closer
to its drift-limited equilibrium, the amplitudes of the two control channels
become comparable. The steady negative value of $u_x$ is not problematic:
its sign merely indicates the direction of the rotation generated by $H_x$
and does not carry any physical restriction or instability implication.
In particular, the control inputs are explicitly saturated in the simulations 
	to reflect physically admissible amplitude constraints. The observed behavior
	therefore demonstrates that the proposed feedback law does not rely on
	unbounded control amplitudes, but enforces Lyapunov descent whenever this is
	physically feasible.

More precisely, near the steady state the empirical Lyapunov gradient satisfies
\[
g_x \approx \frac{\partial V}{\partial \theta_x}, 
\qquad 
g_y \approx \frac{\partial V}{\partial \theta_y},
\]
where $\theta_k$ parametrizes infinitesimal rotations generated by $H_k$.
Because the controller applies
\[
u_x = -\lambda\, g_x, \qquad u_y = -\lambda\, g_y, \qquad (\lambda > 0),
\]
the sign of $u_x$ is determined by the sign of
$\partial V / \partial \theta_x$ evaluated at the drift-limited equilibrium.
Since this equilibrium does not coincide with the target state, the residual
Hamiltonian drift breaks the symmetry of the Lyapunov landscape and induces a
nonzero local gradient, yielding
\[
\frac{\partial V}{\partial \theta_x} > 0
\quad\Rightarrow\quad
u_x < 0.
\]
Thus, the observed negative steady-state value of $u_x$ simply reflects the
direction in which an infinitesimal rotation about $H_x$ would increase the
Lyapunov observable, and the controller compensates by applying a rotation in the
opposite direction.

\medskip
This behavior is fully consistent with the ISS-type analysis developed in
Section~\ref{sec:iss}, which predicts convergence to a disturbance-limited
neighborhood instead of exact asymptotic stabilization.

\begin{figure}[htb]
	\centering
	\includegraphics[width=0.75\textwidth]{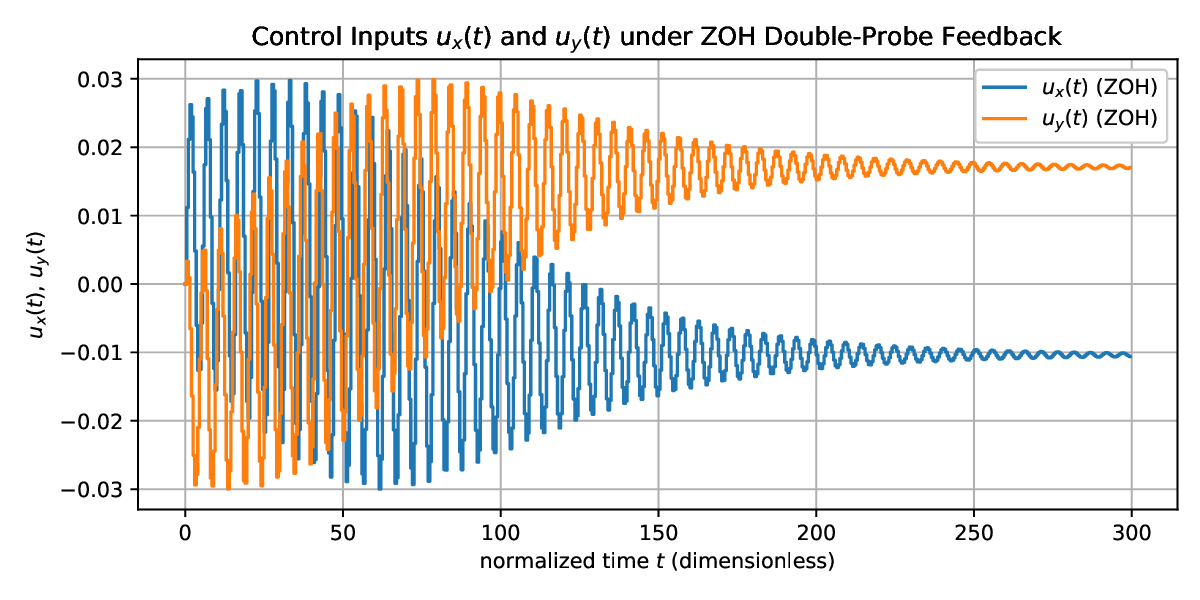}
	\caption{
		Control inputs $u_x(t)$ and $u_y(t)$ generated by the double-probe controller.
		Both channels remain oscillatory as they counteract the unknown drift.
		The control inputs are implemented under zero-order hold with sampling interval
			$\tau = 0.5$, and are therefore piecewise constant on each interval
			$[t_n,t_{n+1})$. While the overall oscillatory behavior is emphasized at the
			displayed scale, a zoomed view reveals the underlying stepwise (piecewise
			constant) structure induced by the sampled-data implementation.
			The time axis is the same normalized time scale as in
			Fig.~\ref{fig:V_drift_doubleprobe}.
	}
		\label{fig:controls_drift_doubleprobe}
\end{figure}

Figure~\ref{fig:bloch_drift_doubleprobe} displays the evolution of the Bloch
components $(x(t),y(t),z(t))$, where each coordinate is defined by
\[
x(t)=\Tr\!\bigl(\sigma_x \rho(t)\bigr),\qquad
y(t)=\Tr\!\bigl(\sigma_y \rho(t)\bigr),\qquad
z(t)=\Tr\!\bigl(\sigma_z \rho(t)\bigr),
\]
see, e.g., the standard Bloch-sphere representation of qubit states
in~\cite{nielsen_chuang_2010}.
These quantities are obtained from measurement statistics associated with the
Pauli operators and together form the Bloch vector associated with the qubit
state $\rho(t)$; geometrically, they determine the point representing $\rho(t)$
on the Bloch sphere.

Each coordinate exhibits persistent but gradually diminishing oscillations,
which are a hallmark of the underlying unknown drift Hamiltonian.
As the controller counteracts the drift using only finite-difference
information, the oscillation amplitudes decrease and the trajectory approaches a
drift-limited steady configuration in all three coordinates.
Thus, although the state is steered toward the vicinity of the north pole
$(0,0,1)$ corresponding to the target state $\ket{0}$, it does not converge
exactly to that point—consistent with the ISS-type limitation proved in
Section~\ref{sec:iss}.

\begin{figure}[htb]
	\centering
	\includegraphics[width=0.75\textwidth]{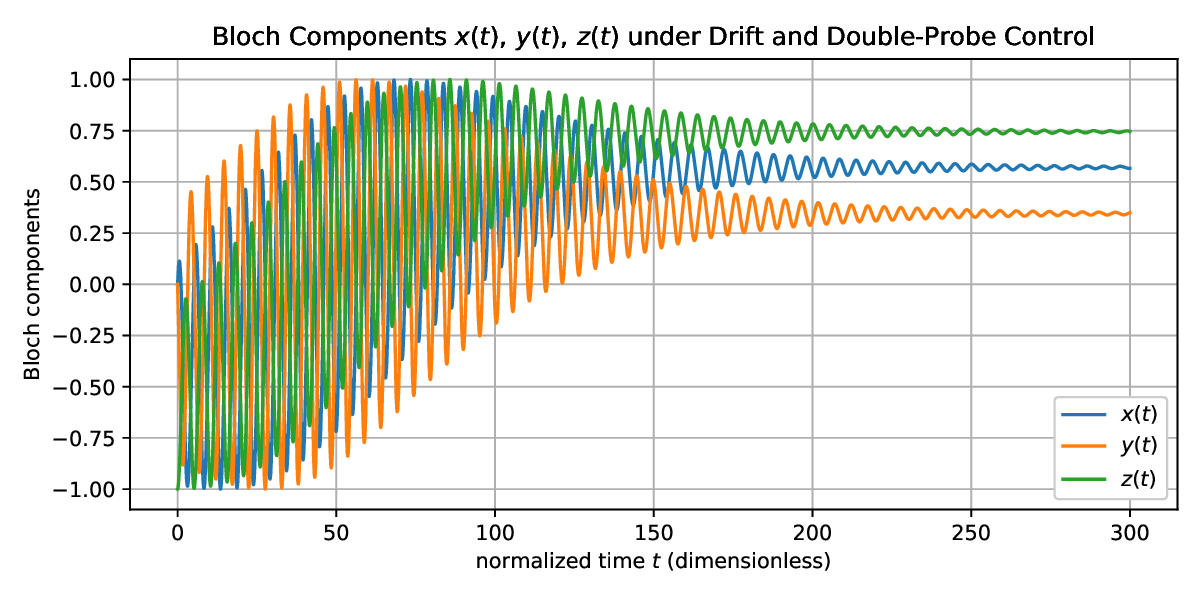}
	\caption{
		Bloch components $(x(t),y(t),z(t))$ under the double-probe controller in
		the presence of unknown drift.  
		All coordinates exhibit oscillations with decreasing amplitude and
		converge to a drift-limited equilibrium instead of the target pole.
		The time axis is the same normalized time scale as in
			Fig.~\ref{fig:V_drift_doubleprobe}.
	}
		\label{fig:bloch_drift_doubleprobe}
\end{figure}

To visualize the geometry of this behavior, 
Figure~\ref{fig:blochsphere_drift_doubleprobe} shows the corresponding
trajectory on the Bloch sphere.  
Starting from the south pole (orthogonal to the target), the state spirals
upward while the controller repeatedly corrects drift-induced deviations.
Because the drift cannot be cancelled without model knowledge, the trajectory
eventually settles on a stationary point located near
\[
(x_\infty, y_\infty, z_\infty) 
\approx (0.5711,\,0.3451,\,0.7448), 
\]
estimated from the last $500$ simulation steps.  
This geometric picture matches precisely the impossibility theorem: 
in the presence of an unknown nonzero drift, the target state is not an
equilibrium of the closed-loop dynamics, hence exact asymptotic stabilization
cannot occur.

\begin{figure}[htb] 
	\centering 
	\includegraphics[width=0.75\textwidth]{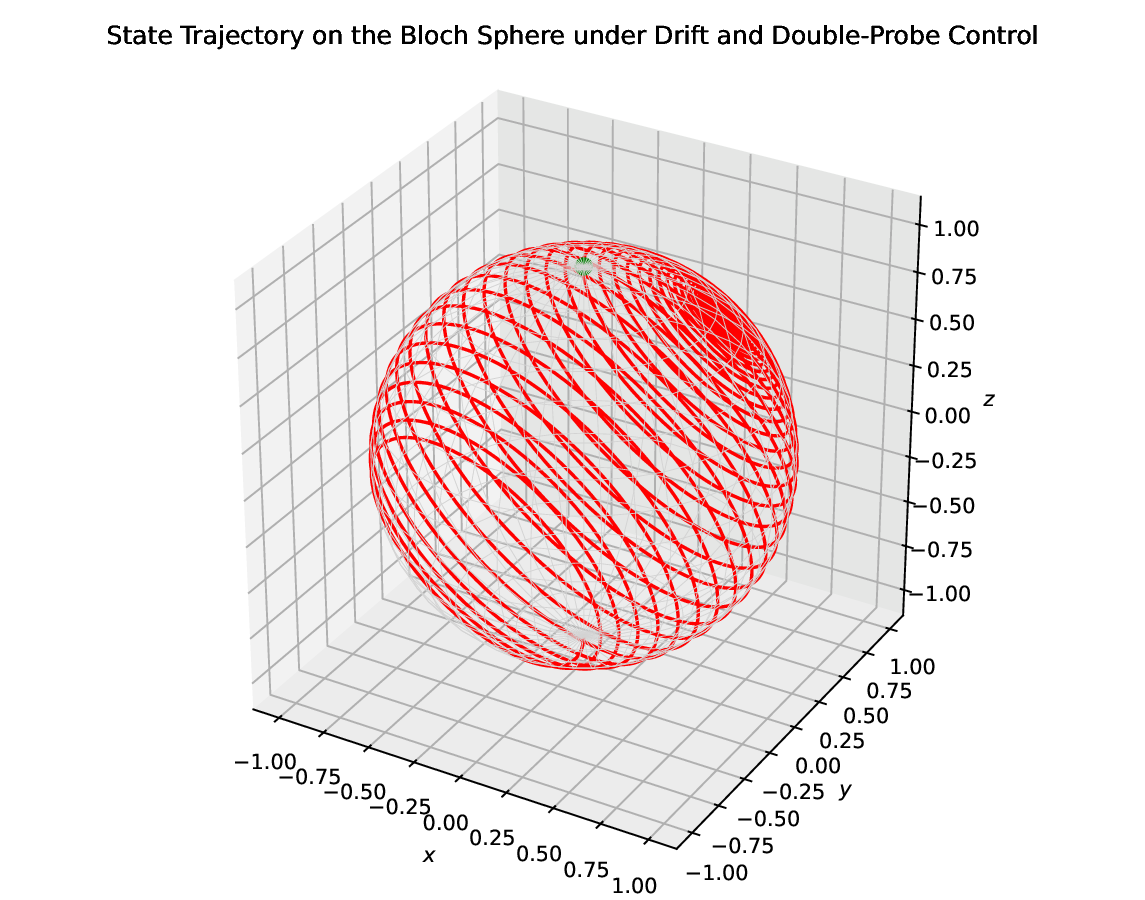}
	\caption{
		Bloch-sphere trajectory of the drifted qubit under the double-probe model-free controller.
		The state follows a decaying spiral and converges to a drift-limited equilibrium point instead of the ideal north pole, because the unknown Hamiltonian drift continuously rotates the state whenever
		$[\rho, H_{\mathrm{drift}}]\neq 0$.
		Only states commuting with the drift Hamiltonian can be fixed points; consequently, the controller achieves practical (ISS-type) stabilization to a small neighborhood whose size is determined by the drift strength.}
		\label{fig:blochsphere_drift_doubleprobe}
\end{figure}
   
\clearpage\newpage

\section{Conclusion}\label{sec:Conclusion}

We have developed a fully model-free framework for stabilizing quantum states
using only empirical evaluations of an adaptive Lyapunov observable.  
The controller requires no knowledge of the underlying generator~$\mathcal{F}$--
neither its Hamiltonian nor dissipative components--and relies exclusively on
finite-difference information obtained from measurement data.  
A simple sign-based feedback law, together with adaptive gain amplification,
enforces empirical Lyapunov descent without analytic derivatives or model
identification.

The central theoretical contribution is a finite-difference analogue of
LaSalle’s invariance principle.  
We show that sign-consistent feedback guarantees descent of the Lyapunov
observable whenever the system is away from the target, while adaptive gains
prevent stagnation at any positive Lyapunov level.  
Combined, these mechanisms ensure convergence of the Lyapunov observable along
the sampling instants and force its limit to zero in the drift-free case,
thereby establishing asymptotic stabilization.
When unknown drift or dissipation is present, the same structure yields an
ISS-type estimate, demonstrating practical stabilization to a disturbance-limited
neighbourhood of the target.

A single-qubit example illustrates the complete closed-loop mechanism.
The construction extends directly to arbitrary finite-dimensional quantum systems under the same uniform level-set descendability assumptions, without requiring any geometric conditions beyond physical realizability of the available controls and measurements.
Numerical simulations confirm the predicted behavior, including the fundamental ISS limitation in the presence of unknown disturbances.
Overall, the results provide a scalable and experimentally feasible paradigm for
quantum feedback based solely on finite-difference measurement data.  
The framework suggests several promising directions, including stochastic
extensions for weak measurements, stabilization of mixed states and subspaces,
performance-oriented adaptation schemes, and applications to multi-qubit and
multi-qudit architectures--pointing toward a broader theory of model-free quantum
control.

Finally, we emphasize that the proposed stabilization framework is intrinsically
	hybrid.  
	The quantum state $\rho(t)$ evolves in continuous time according to the underlying
	(open-loop or closed-loop) Schrödinger or Lindblad dynamics, while all feedback
	decisions--including Lyapunov evaluation, sign selection, and gain adaptation--
	are performed exclusively at discrete sampling instants $t_n=t_0+n\tau$ and
	applied under a zero-order hold assumption on the intervals $[t_n,t_{n+1})$.  
	
	All stability guarantees in this work are therefore formulated with respect to the
	sampled Lyapunov sequence $\{V(t_n)\}_{n\ge 0}$.
	This explicit separation resolves the apparent tension between continuous-time
	quantum evolution and measurement-driven feedback, and places the analysis firmly
	within a sampled-data control paradigm compatible with realistic experimental
	implementations.

\section*{Disclosure statement}
The authors declare that they have no financial or personal conflicts of interest that could have influenced the work reported in this manuscript.

\section*{Data availability statement}
No new data were created or analysed in this study. Therefore, data sharing is not applicable to this article.



\begin{thebibliography}{26}
		\providecommand{\natexlab}[1]{#1}
		\providecommand{\url}[1]{\texttt{#1}}
		\expandafter\ifx\csname urlstyle\endcsname\relax
		\providecommand{\doi}[1]{doi: #1}\else
		\providecommand{\doi}{doi: \begingroup \urlstyle{rm}\Url}\fi
		
		\bibitem[Wiseman and Milburn(2011)]{wiseman_milburn_2010}
		H.~M. Wiseman and G.~J. Milburn.
		\newblock \emph{Quantum Measurement and Control}.
		\newblock Cambridge University Press, 2011.
		\newblock URL \url{https://doi.org/10.1017/CBO9780511813948}.
		
		\bibitem[Weidner et~al.(2025)Weidner, Reed, Monroe, Sheller, O’Neil, Maas,
		Jonckheere, Langbein, and Schirmer]{weidner2025robust}
		Carrie~Ann Weidner, Emily~A. Reed, Jonathan Monroe, Benjamin Sheller, Sean
		O’Neil, Eliav Maas, Edmond~A. Jonckheere, Frank~C. Langbein, and Sophie
		Schirmer.
		\newblock Robust quantum control in closed and open systems: Theory and
		practice.
		\newblock \emph{Automatica}, 172:\penalty0 111987, 2025.
		\newblock \doi{10.1016/j.automatica.2024.111987}.
		\newblock URL \url{https://doi.org/10.1016/j.automatica.2024.111987}.
		
		\bibitem[Dong and Petersen(2010)]{dong_petersen_2010}
		Daoyi Dong and Ian~R. Petersen.
		\newblock Quantum control theory and applications: a survey.
		\newblock \emph{IET Control Theory \& Applications}, 4\penalty0 (12):\penalty0
		2651--2671, 2010.
		\newblock URL \url{https://doi.org/10.1049/iet-cta.2009.0508}.
		
		\bibitem[Altafini and Ticozzi(2012)]{altafini_ticozzi_2012}
		Claudio Altafini and Francesco Ticozzi.
		\newblock Modeling and control of quantum systems: an introduction.
		\newblock \emph{IEEE Transactions on Automatic Control}, 57\penalty0
		(8):\penalty0 1898--1917, 2012.
		\newblock URL \url{https://doi.org/10.1109/TAC.2012.2195830}.
		
		\bibitem[Ticozzi et~al.(2010)Ticozzi, Schirmer, and Wang]{ticozzi_viola_2008}
		Francesco Ticozzi, Sophie~G. Schirmer, and Xiaoting Wang.
		\newblock Stabilizing quantum states by constructive design of open quantum
		dynamics.
		\newblock \emph{IEEE Transactions on Automatic Control}, 55\penalty0
		(12):\penalty0 2901--2905, 2010.
		\newblock \doi{10.1109/TAC.2010.2079532}.
		\newblock URL
		\url{https://ieeexplore.ieee.org/document/5585722/similar#similar}.
		
		\bibitem[Belavkin(1983)]{belavkin_1983}
		Viacheslav Belavkin.
		\newblock Towards the theory of control in observable quantum systems.
		\newblock \emph{Automatica and Remote Control}, 44:\penalty0 178--188, 1983.
		\newblock URL \url{https://doi.org/10.48550/arXiv.quant-ph/0408003}.
		
		\bibitem[Wiseman and Milburn(1993)]{wiseman_milburn_1993}
		H.~M. Wiseman and G.~J. Milburn.
		\newblock Quantum theory of optical feedback via homodyne detection.
		\newblock \emph{Physical Review Letters}, 70:\penalty0 548--551, 1993.
		\newblock URL \url{https://doi.org/10.1103/PhysRevLett.70.548}.
		
		\bibitem[Bouten et~al.(2007)Bouten, van Handel, and James]{bouten_2007}
		Luc Bouten, Ramon van Handel, and Matthew~R. James.
		\newblock An introduction to quantum filtering.
		\newblock \emph{SIAM Journal on Control and Optimization}, 46\penalty0
		(6):\penalty0 2199--2241, 2007.
		\newblock URL \url{https://doi.org/10.1137/060651239}.
		
		\bibitem[Khalil(2002)]{khalil_2002}
		Hassan~K. Khalil.
		\newblock \emph{Nonlinear Systems}.
		\newblock Prentice Hall, Upper Saddle River, {N.J.}, 3 edition, 2002.
		\newblock ISBN 9780130673893.
		
		\bibitem[Sontag(1989)]{sontag_1989}
		Eduardo~D. Sontag.
		\newblock Smooth stabilization implies coprime factorization.
		\newblock \emph{IEEE Transactions on Automatic Control}, 34\penalty0
		(4):\penalty0 435--443, 1989.
		\newblock URL \url{https://ieeexplore.ieee.org/stamp/stamp.jsp?arnumber=28018}.
		
		\bibitem[Jiang et~al.(1996)Jiang, Mareels, and Wang]{jiang_mareels_wang_1996}
		Zhong-Ping Jiang, Iven~M.Y. Mareels, and Yuan Wang.
		\newblock A {L}yapunov formulation of the nonlinear small-gain theorem for
		interconnected {ISS} systems.
		\newblock \emph{Automatica}, 32\penalty0 (8):\penalty0 1211--1215, 1996.
		\newblock URL \url{https://doi.org/10.1016/0005-1098(96)00051-9}.
		
		\bibitem[Ticozzi et~al.(2012)Ticozzi, Lucchese, Cappellaro, and
		Viola]{ticozzi_viola_2012}
		Francesco Ticozzi, Riccardo Lucchese, Paola Cappellaro, and Lorenza Viola.
		\newblock Hamiltonian control of quantum dynamical semigroups: Stabilization
		and convergence speed.
		\newblock \emph{IEEE Transactions on Automatic Control}, 57\penalty0
		(8):\penalty0 1931--1944, 2012.
		\newblock URL \url{https://ieeexplore.ieee.org/document/6189050}.
		
		\bibitem[Emzir et~al.(2022)Emzir, Woolley, and Petersen]{Emzir2022}
		Muhammad~Fuady Emzir, Matthew~J. Woolley, and Ian~R. Petersen.
		\newblock Stability analysis of quantum systems: {A} {L}yapunov criterion and
		an invariance principle.
		\newblock \emph{Automatica}, 146:\penalty0 110660, 2022.
		\newblock URL \url{https://doi.org/10.1016/j.automatica.2022.110660}.
		
		\bibitem[Wu et~al.(2025)Wu, Xue, Ma, Kuang, Dong, and
		Petersen]{wu_switching_qubit_2025}
		Guangpu Wu, Shibei Xue, Shan Ma, Sen Kuang, Daoyi Dong, and Ian~R. Petersen.
		\newblock Arbitrary state transition of open qubit system based on switching
		control.
		\newblock \emph{Automatica}, 179:\penalty0 112424, 2025.
		\newblock \doi{10.1016/j.automatica.2025.112424}.
		\newblock URL
		\url{https://www.sciencedirect.com/science/article/pii/S0005109825003188?via%3Dihub}.
		
		\bibitem[Lindblad(1976)]{lindblad_1976}
		Göran Lindblad.
		\newblock On the generators of quantum dynamical semigroups.
		\newblock \emph{Communications in Mathematical Physics}, 48:\penalty0 119--130,
		1976.
		\newblock URL \url{https://doi.org/10.1007/BF01608499}.
		
		\bibitem[Gorini et~al.(1976)Gorini, Kossakowski, and
		Sudarshan]{gorini_kossakowski_sudarshan_1976}
		Vittorio Gorini, Andrzej Kossakowski, and E.~C.~G. Sudarshan.
		\newblock Completely positive dynamical semigroups of {N}-level systems.
		\newblock \emph{Journal of Mathematical Physics}, 17\penalty0 (5):\penalty0
		821--825, 1976.
		\newblock URL \url{https://doi.org/10.1063/1.522979}.
		
		\bibitem[Pan et~al.(2014)Pan, Amini, Miao, Gough, Ugrinovskii, and
		James]{pan_amini_miao_gough_ugrinovskii_james_2014}
		Yu~Pan, Hadis Amini, Zibo Miao, John Gough, Valery Ugrinovskii, and Matthew~R.
		James.
		\newblock Heisenberg picture approach to the stability of quantum {M}arkov
		systems.
		\newblock \emph{Journal of Mathematical Physics}, 55\penalty0 (6):\penalty0
		062701, 2014.
		\newblock \doi{10.1063/1.4884300}.
		\newblock URL \url{https://doi.org/10.1063/1.4884300}.
		
		\bibitem[Song et~al.(2025)Song, Liu, Dong, and
		Yonezawa]{song_drl_feedback_2025}
		Chunxiang Song, Yanan Liu, Daoyi Dong, and Hidehiro Yonezawa.
		\newblock Fast state stabilization using deep reinforcement learning for
		measurement-based quantum feedback control.
		\newblock \emph{IEEE Transactions on Quantum Engineering}, 6, 2025.
		\newblock \doi{10.1109/TQE.2025.3606123}.
		\newblock URL
		\url{https://ieeexplore.ieee.org/stamp/stamp.jsp?arnumber=11150735}.
		
		\bibitem[Burgarth and Yuasa(2012)]{burgarth_yuasa_prl_2012}
		Daniel Burgarth and Kazuya Yuasa.
		\newblock Quantum system identification.
		\newblock \emph{Physical Review Letters}, 108:\penalty0 080502, 2012.
		\newblock \doi{10.1103/PhysRevLett.108.080502}.
		\newblock URL
		\url{https://journals.aps.org/prl/abstract/10.1103/PhysRevLett.108.080502}.
		
		\bibitem[Petersen et~al.(2012)Petersen, Ugrinovskii, and
		James]{petersen_ugrinovskii_james_2012}
		Ian~R. Petersen, Valery Ugrinovskii, and Matthew~R. James.
		\newblock Robust stability of uncertain linear quantum systems.
		\newblock \emph{Philosophical Transactions of the Royal Society A},
		370\penalty0 (1979):\penalty0 5354--5363, 2012.
		\newblock \doi{10.1098/rsta.2011.0527}.
				
		\bibitem[Zhang and Sarovar(2014)]{zhang_sarovar_2014}
		Jun Zhang and Mohan Sarovar.
		\newblock Quantum hamiltonian identification from measurement time traces.
		\newblock \emph{Physical Review Letters}, 113\penalty0 (8):\penalty0 080401,
		2014.
		\newblock \doi{10.1103/PhysRevLett.113.080401}.
		\newblock URL \url{https://doi.org/10.1103/PhysRevLett.113.080401}.
		
		\bibitem[Bukov et~al.(2018)Bukov, Day, Sels, Weinberg, Polkovnikov, and
		Mehta]{bukov_rl_2018}
		Marin Bukov, Alexandre G.~R. Day, Dries Sels, Phillip Weinberg, Anatoli
		Polkovnikov, and Pankaj Mehta.
		\newblock Reinforcement learning in different phases of quantum control.
		\newblock \emph{Physical Review X}, 8\penalty0 (3):\penalty0 031086, 2018.
		\newblock \doi{10.1103/PhysRevX.8.031086}.
		\newblock URL \url{https://doi.org/10.1103/PhysRevX.8.031086}.
		
		\bibitem[Niu et~al.(2019)Niu, Boixo, Smelyanskiy, and Neven]{niu_rl_2019}
		Murphy~Yuezhen Niu, Sergio Boixo, Vadim~N. Smelyanskiy, and Hartmut Neven.
		\newblock Universal quantum control through deep reinforcement learning.
		\newblock \emph{npj Quantum Information}, 5:\penalty0 33, 2019.
		\newblock \doi{10.1038/s41534-019-0141-3}.
		\newblock URL \url{https://www.nature.com/articles/s41534-019-0141-3}.
		
		\bibitem[Clerk et~al.(2010)Clerk, Devoret, Girvin, Marquardt, and
		Schoelkopf]{clerk_2010}
		A.~A. Clerk, M.~H. Devoret, S.~M. Girvin, Florian Marquardt, and R.~J.
		Schoelkopf.
		\newblock Introduction to quantum noise, measurement, and amplification.
		\newblock \emph{Reviews of Modern Physics}, 82\penalty0 (2):\penalty0
		1155--1208, 2010.
		\newblock \doi{10.1103/RevModPhys.82.1155}.
		\newblock URL \url{https://doi.org/10.1103/RevModPhys.82.1155}.
		
		\bibitem[D'Alessandro(2007)]{d_alessandro_2007}
		Domenico D'Alessandro.
		\newblock \emph{Introduction to Quantum Control and Dynamics}.
		\newblock Chapman and Hall/CRC Press, 2007.
		\newblock ISBN 9781584888833.
		\newblock URL \url{https://doi.org/10.1201/9781584888833}.
		
		\bibitem[Nielsen and Chuang(2010)]{nielsen_chuang_2010}
		Michael~A. Nielsen and Isaac~L. Chuang.
		\newblock \emph{Quantum Computation and Quantum Information: 10th Anniversary
			Edition}.
		\newblock Cambridge University Press, 2010.
		\newblock ISBN 9781107002173.
		
	\end{thebibliography}
\end{document}